\tikzset{
    %Define standard arrow tip
    >=stealth',
    %Define style for boxes
    punkt/.style={
           rectangle,
           rounded corners,
           draw=black, very thick,
           text width=8.5em,
           minimum height=2em,
           text centered},
     beschr/.style={
           rectangle,
           rounded corners,
           draw=gray, thick,
           text width=8.5em,
           minimum height=2em,
           text centered},
    % Define arrow style
    pil/.style={
           ->,
           thick,
           shorten <=2pt,
           shorten >=2pt,}
}
\newcommand{\D}{{\mathop{}\!\mathrm{d}}} % für Integrad - d's.
\newcommand{\B}{\boldsymbol}
\newcommand{\R}{\mathbb{R}}
\newcommand{\Q}{\mathbb{Q}}
\newcommand{\C}{\mathcal{C}}
\newcommand{\K}{\mathbb{K}}
\newcommand{\N}{\mathbb{N}}
\newcommand{\E}{\mathbb{E}}
\newcommand{\black}{\color{black}}
\newcommand{\one}{ 1 \hspace{-3pt} \mathrm{l}} %
\numberwithin{equation}{section}  
\newtheorem{defn}{Definition}[section]
\newtheorem{exa}[defn]{Example}
\newtheorem{rem}[defn]{Remark}
\newtheorem{thm}[defn]{Theorem}
\newtheorem{prop}[defn]{Proposition}
\begin{document}
%
% paper title
% Titles are generally capitalized except for words such as a, an, and, as,
% at, but, by, for, in, nor, of, on, or, the, to and up, which are usually
% not capitalized unless they are the first or last word of the title.
% Linebreaks \\ can be used within to get better formatting as desired.
% Do not put math or special symbols in the title.
\title{A deep learning approach to data-driven model-free pricing and to martingale optimal transport}
%
%
% author names and IEEE memberships
% note positions of commas and nonbreaking spaces ( ~ ) LaTeX will not break
% a structure at a ~ so this keeps an author's name from being broken across
% two lines.
% use \thanks{} to gain access to the first footnote area
% a separate \thanks must be used for each paragraph as LaTeX2e's \thanks
% was not built to handle multiple paragraphs
%
%\author{Michel Baes\thanks{RiskLab, Department of Mathematics, ETH Zurich, e-mail: $michel.baes@math.ethz.ch$}\and Calypso Herrera\thanks{Department of Mathematics, ETH Zurich, e-mail: $calypso.herrera@math.ethz.ch$}\and Ariel Neufeld\thanks{Division of Mathematical Sciences, NTU Singapore, e-mail: $ariel.neufeld@ntu.edu.sg$}  \and Pierre Ruyssen\thanks{
%		Google Brain, Google Zurich, 
%		e-mail: $pierre.ruyssen@gmail.com$ 

\author{%~\IEEEmembership{Member,~IEEE,}
         %~\IEEEmembership{Fellow,~OSA,}
        Ariel Neufeld %~\IEEEmembership{Fellow,~OSA,}
        and~Julian Sester%,~\IEEEmembership{Life~Fellow,~IEEE}% <-this % stops a space

\thanks{
A. Neufeld 
is with the Division of Mathematical Sciences, 

NTU Singapore.
e-mail: ariel.neufeld@ntu.edu.sg}
\thanks{
	J. Sester 
	is with the Department of Mathematics, 
	
	NUS Singapore.
	e-mail: jul{\_}ses@nus.edu.sg}
%\thanks{Manuscript received January 10, 2022. %; 
%	%revised August 26, 2015.
%}
}

% note the % following the last \IEEEmembership and also \thanks - 
% these prevent an unwanted space from occurring between the last author name
% and the end of the author line. i.e., if you had this:
% 
% \author{....lastname \thanks{...} \thanks{...} }
%                     ^------------^------------^----Do not want these spaces!
%
% a space would be appended to the last name and could cause every name on that
% line to be shifted left slightly. This is one of those "LaTeX things". For
% instance, "\textbf{A} \textbf{B}" will typeset as "A B" not "AB". To get
% "AB" then you have to do: "\textbf{A}\textbf{B}"
% \thanks is no different in this regard, so shield the last } of each \thanks
% that ends a line with a % and do not let a space in before the next \thanks.
% Spaces after \IEEEmembership other than the last one are OK (and needed) as
% you are supposed to have spaces between the names. For what it is worth,
% this is a minor point as most people would not even notice if the said evil
% space somehow managed to creep in.

% The paper headers
\markboth{A deep learning approach to data-driven model-free pricing and to martingale optimal transport}%
{Shell \MakeLowercase{\textit{et al.}}: Bare Demo of IEEEtran.cls for IEEE Journals}
% The only time the second header will appear is for the odd numbered pages
% after the title page when using the twoside option.
% 
% *** Note that you probably will NOT want to include the author's ***
% *** name in the headers of peer review papers.                   ***
% You can use \ifCLASSOPTIONpeerreview for conditional compilation here if
% you desire.

% If you want to put a publisher's ID mark on the page you can do it like
% this:
%\IEEEpubid{0000--0000/00\$00.00~\copyright~2015 IEEE}
% Remember, if you use this you must call \IEEEpubidadjcol in the second
% column for its text to clear the IEEEpubid mark.

% use for special paper notices
%\IEEEspecialpapernotice{(Invited Paper)}

% make the title area
\maketitle
% As a general rule, do not put math, special symbols or citations
% in the abstract or keywords.

\begin{abstract}
We introduce a novel and highly tractable supervised learning approach based on neural networks that can be applied for the computation of model-free price bounds of, potentially high-dimensional, financial derivatives and for the determination of optimal hedging strategies attaining these bounds. In particular, our methodology allows to train a single neural network offline and then to use it online for the fast determination of model-free price bounds of a whole class of financial derivatives with current market data. We show the applicability of this approach and highlight its accuracy in several examples involving real market data. Further, we show how a neural network can be trained to solve martingale optimal transport problems involving fixed marginal distributions instead of financial market data.
\end{abstract}
% Note that keywords are not normally used for peerreview papers.
%\begin{IEEEkeywords}
%Model-free price bounds, Deep learning, Data-driven pricing, Martingale optimal transport
%\end{IEEEkeywords}

% For peer review papers, you can put extra information on the cover
% page as needed:
% \ifCLASSOPTIONpeerreview
% \begin{center} \bfseries EDICS Category: 3-BBND \end{center}
% \fi
%
% For peerreview papers, this IEEEtran command inserts a page break and
% creates the second title. It will be ignored for other modes.
\IEEEpeerreviewmaketitle

%%%%%%%%%%%%%%%%%%%%%%%%%%%%%5
\section{Introduction}
\small{ \IEEEPARstart{F}{inancial} derivatives are financial contracts between the corresponding seller, typically a bank, and a buyer, typically another financial institution or a private person, with a future uncertain payoff depending on another (typically simpler) financial instrument, often a stock, to which we refer as the underlying security. Options are a large class of financial derivatives which allow, but do not oblige the owner of the option to buy or sell the underlying securities involved in the contract. The most common types of traded financial derivatives are call and put options which allow to buy and sell, respectively, the underlying single security at a future maturity at a predetermined price, the so called \emph{strike} of the option. Due to the uncertainty involved in the future cashflow, today's price of the financial derivative is a priori unclear and subject to a high degree of ambiguity. \black The classical paradigm in mathematical finance, which is commonly applied to determine the \emph{fair} value of some financial derivative, consists in capturing the developments of the real underlying market by a sophisticated financial market model\footnote{Examples for sophisticated financial market models include among many others the Heston model (compare~\cite{heston1993closed}) and Dupire's local volatility model (compare~\cite{dupire1994pricing}).}.  \black This model is then calibrated to observable market parameters such as current spot prices, prices of liquid options, interest rates, and dividends, and is thus believed to capture the reality appropriately, see e.g. \cite{schoutens2003perfect} for details of this procedure. However, such an approach evidently involves the uncontrollable risk of having a priori chosen the wrong type of model - this refers to the so called \emph{Knightian uncertainty} (\cite{knight1921}).

To reduce this apparent model risk the research in the area of mathematical finance recently developed a strong interest in the computation of model-independent\footnote{The terms model-free and model-independent are used synonymously in the literature.} and robust price bounds for financial derivatives (compare among many others \cite{beiglbock2013model}, \cite{burzoni2017model}, \cite{cox2011robust}, \cite{davis2014arbitrage}, \cite{dolinsky2014martingale}, \cite{galichon2014stochastic}, \cite{hobson2011skorokhod},  \cite{hobson2012robust}, \cite{hou2018robust}, \cite{neufeld2013superreplication}, and \cite{neufeld2020model}). We speak of model-independent price bounds if realized prices within these bounds exclude any arbitrage opportunities\footnote{ \emph{Arbitrage} refers to a profit that can be realized without taking any risk. Prices of a derivative that allow for arbitrage are considered as not reasonable as the arbitrage profit would be immediately exploited by \emph{arbitrageurs}.\black} under usage of liquid market instruments independent of any model assumptions related to potential underlying stochastic models, whereas robust price bounds refer to the exclusion of model-dependent arbitrage within a range of models that are deemed to be admissible.

We present an  approach enabling the fast and reliable computation of model-independent price bounds of financial derivatives. This approach is mainly based on supervised deep learning (\cite{lecun2015deep}, \cite{schmidhuber2015deep}) and proposes how a deep\footnote{We speak of \emph{deep} neural networks if there are at least $2$ hidden layers involved.} feed-forward neural network\footnote{As a convention we refer to feed-forward simply as neural networks throughout the paper.} can be trained to learn the relationship between observed financial data and associated model-independent price bounds of any potentially high-dimensional financial derivative from an entire parametric class of exotic\footnote{ Every option that is neither a call nor a put option is called \emph{exotic}.} options. The great advantage of the presented methodology is that, in contrast to computational intensive and therefore potentially time-consuming pricing methods which have to be reapplied for each new set of observed financial data and each derivative one wants to valuate, it allows to use a \textit{sole} \textit{pre-trained} neural network for \textit{real time} pricing of every financial derivative from a pre-specified class of payoff functions. 
Let us consider some family of financial derivatives defined through payoff functions
\[
\Phi_\theta:\R_+^{nd} \rightarrow \R,~\theta \in \Theta,
\]
which determines the payoff an investor receives at time $t_n$ in case he bought the derivative $\Phi_\theta$ at initial time $t_0$. The payoff depends on the values of $d\in \N$ underlying securities at $n \in \N$ future times $t_1<t_2<\cdots<t_n$, i.e., the derivative depends on each security $\B{S^k}=(S_{t_1}^k,\dots,S_{t_n}^k)$ for $k=1,\dots,d$. The goal is then to determine all possible today's prices for each $\Phi_\theta$ such that a potential investor cannot profit from one of these prices to exploit \emph{model-independent arbitrage}. This notion refers to strategies that involve trading in underlying securities and/or in liquid options which are cost-free and lead to a profit independent of any model assumptions, i.e., for any possible future evolvement of the underlying security, see also \cite{acciaio2016model}.
As a canonical example we consider the class of payoffs associated to basket options,  which are financial derivatives that allow (but not oblige) at a future time to buy a weighted sum of financial assets (with weights denoted by $(w_i^k)_{i,k}$)  at a predetermined strike $L$. Such an option is only executed if it is favorable for the option-holder to do so, which is the case if the difference between the weighted sum and the strike is positive and therefore  the set of payoffs is given by \black 
\begingroup\makeatletter\def\f@size{8}\check@mathfonts
\def\maketag@@@#1{\hbox{\m@th\normalsize\normalfont#1}}
\begin{equation}\label{eq_intro_basket}
\begin{aligned}
\left\{\Phi_\theta(\B{S}^1,\dots,\B{S}^d); \theta \in \Theta \right\}:= \bigg\{&\max\big\{\sum_{i=1}^n\sum_{k=1}^d w_i^k S_{t_i}^k-L,0\big\} \text{ where } \theta:=\left((w_i^k)_{i,k},L\right) \in  \R^{nd}\times \R\bigg\}.
\end{aligned}
\end{equation}
\endgroup
To find the arbitrage-free upper price bounds of a financial derivative $\Phi_\theta$, we consider model-independent super-replication strategies  (also called super-hedging strategies) \black of $\Phi_\theta$, i.e.,  trading \black strategies that lead for every possible evolvement of the underlying securities to a greater or equal outcome than the payoff of $\Phi_\theta$,  which is referred to as the trading strategy super-replicating \black  $\Phi_\theta$. Prices of such strategies need to be at least as high as the price of $\Phi_\theta$, since otherwise the market would admit model-independent arbitrage, which can indeed be seen by buying the strategy and by selling the derivative $\Phi_\theta$ at initial time. Thus, the smallest price among all model-independent super-replication strategies leads to the arbitrage-free upper price bound of $\Phi_\theta$. Analogue, the greatest price among sub-replication strategies yields the arbitrage-free lower price bound.
Moreover, it is a consequence of (adaptions of) the fundamental theorem of asset pricing (see \cite{acciaio2016model} and \cite{delbaen1994general}) that there exist a dual method to approach the valuation problem: One may also consider all martingale models\footnote{ One often refers to martingale models  as \emph{risk-neutral models}, in which an investor is indifferent of either investing in the underlying security or keeping her money in a bank account with constant interest rate (where typically one assumes the interest rate to be zero for simplicity).\black} which are consistent with bid and ask prices of liquidly traded option prices written on the underlying securities $\boldsymbol{S}=(\boldsymbol{S^1},\dots,\boldsymbol{S^d})$ and expire at the future maturities $(t_i)_{i=1,\dots,n}$ as candidate models.
Then, minimizing and maximizing the expectations $\E_\Q[\Phi_\theta(\boldsymbol{S})]$ among all associated  martingale / \black risk-neutral measures $\Q$ of potential models leads to the desired price bounds, compare e.g.~\cite{acciaio2016model} and \cite{cheridito2017duality} for such results in the discrete time model-independent setting.

Given a payoff function $\Phi_\theta$ from a (parametric) set of payoff functions $\left\{\Phi_\theta, \theta \in \Theta\right\}$, for example from the set of basket options as in \eqref{eq_intro_basket}, we use the sub/super-replication method to compute the lower and upper price bound of $\Phi_{\theta}$ for various different sets of financial data and for several choices of $\theta\in \Theta$, i.e., we compute the bounds in dependence of different observed financial data.
The observable market parameters comprised in the financial data include prices of the underlying securities as well as bid and ask prices of liquidly traded call and put options and its associated strikes. After having computed the price bounds for various different sets of financial data, we let, in accordance with the universal approximation theorem from \cite{hornik1991approximation}, a specially designed neural network learn the relationship between observed financial data and the corresponding model-independent price bounds for a parametric family of payoff functions, compare also Figure~\ref{fig_diagram}.

\begin{figure}[h!]
\begin{center}

\subfloat[\label{fig_diagram}]{
\resizebox{.45\textwidth}{!}{
\begin{tikzpicture}
\node[punkt, inner sep=11pt]
 (y) {$\{\boldsymbol{Y_i}\}_{i}=\{\text{Price bounds of }\Phi_\theta \}$};
 \node[above=1cm of y](dummleft){};
 \node[punkt, inner sep=11pt,above=1cm of dummleft](x){$\{\boldsymbol{X_i}\}_i=$ $\{ \text{Market parameters}$ $\text{and }$ $\theta \}~$}
 	edge[pil](y.north);
 \node[beschr,left=0.2cm of dummleft](textleftarrow){Precise algorithm (possibly time consuming)};
 \node[right=2cm of x](dummy1){};
 \node[below=2.2cm of dummy1](dummy1below){};
 \node[right=3.5cm of dummy1below](dummy2){};
 \node[beschr, inner sep=7pt,below=1cm of dummy1](nnlearn){NN learns relation between $\{\boldsymbol{X_i}\}_i$ and $\{\boldsymbol{Y_i}\}_i$.};
 \node[punkt, inner sep=7pt,right=1.5cm of nnlearn](nn){Trained NN};
 \node[right=0cm of x](dummy_arrow1){}
  	edge[pil,bend left=45](nnlearn.north);
 \node[right=0cm of y](dummy_arrow2){}
   	edge[pil,bend right=45](nnlearn.south);
 \node[right = 0cm of nnlearn](dumm_arrow3){} 
  	edge[pil](nn.west);
\end{tikzpicture}
}}
\hspace{0.1cm}
\subfloat[\label{fig_diagram_mot}]{\resizebox{.45\textwidth}{!}{
\begin{tikzpicture}[node distance=1cm, auto,]
\node[punkt, inner sep=7pt]
 (y) {$\{\boldsymbol{Y_i}\}_{i}=\{\text{Price bounds of }\Phi \}$};
 \node[above=1cm of y](dummleft){};
 \node[punkt, inner sep=7pt,above=1cm of dummleft](x){$\{\boldsymbol{X_i}\}_i=\hspace{2cm}\{$Discretized Marginals $(\mu_1,\mu_2) \}$}
 	edge[pil](y.north);
 \node[beschr,left=0.2cm of dummleft](textleftarrow){Linear Programming};
 \node[right=2cm of x](dummy1){};
 \node[below=2.2cm of dummy1](dummy1below){};
 \node[right=3.5cm of dummy1below](dummy2){};
 \node[beschr, inner sep=7pt,below=1cm of dummy1](nnlearn){NN learns relation between $\{\boldsymbol{X_i}\}_i$ and $\{\boldsymbol{Y_i}\}_i$.};
 \node[punkt, inner sep=7pt,right=1.5cm of nnlearn](nn){Trained NN};
 \node[right=0cm of x](dummy_arrow1){}
  	edge[pil,bend left=45](nnlearn.north);
 \node[right=0cm of y](dummy_arrow2){}
   	edge[pil,bend right=45](nnlearn.south);
 \node[right = 0cm of nnlearn](dumm_arrow3){} 
  	edge[pil](nn.west);
\end{tikzpicture}}
}
\caption{(a): Illustration of the presented approach, that is described in detail in Algorithm~\ref{algo_training_nn}, in order to train a neural network (NN) to learn the model-independent price bounds of a derivative $\Phi_\theta$ from a family $\{\Phi_\theta, \theta \in \Theta\}$ in dependence of given market prices. \\
(b): Illustration of Algorithm~\ref{algo_training_mot} which is applied to learn price bounds of MOT problems from marginals. The price bounds contained in $\boldsymbol{Y_i}$ correspond to the solutions of MOT problems, i.e., to $
\inf_{\Q \in \mathcal{M}(\mu_1,\mu_2)}\E_\Q[\Phi(S_{t_1},S_{t_2})] $ and $\sup_{\Q \in \mathcal{M}(\mu_1,\mu_2)}\E_\Q[\Phi(S_{t_1},S_{t_2})]
$, where $\mathcal{M}(\mu_1,\mu_2)$ denotes the set of martingale measures with fixed marginal distributions $\mu_1$ and $\mu_2$, compare also equation \eqref{eq_definition_mot_set}.}
\end{center}
\end{figure}
 While there exist \black several numerical routines to compute model-free price bounds of financial derivatives,   
 the only numerical routine that allows to compute model-free price bounds in a purely data-driven approach without imposing any probabilistic assumptions on the market is the approach from \cite{neufeld2020model} which we therefore use to construct a training set of price bounds. Indeed, while \cite{eckstein2019robust}, \cite{eckstein2019computation}, \cite{guo2019computational}, \cite{henry2013automated}, \cite{henry2019martingale}  all provide methods to compute price bounds of financial derivatives, they all rely fundamentally on the assumption that the marginal distributions of each single asset are known exactly. Moreover,
\black 
each established methodology so far requires for every new set of financial data to employ a potentially time-consuming valuation method to find price bounds for every financial derivative of interest.
Our approach circumvents this problem as it enables to train offline a \textit{single} neural network for a \textit{whole family} of related payoff functions, such as e.g.\ basket options with different weights and strikes, and then to determine model-free price bounds in \textit{real time} by using the already trained neural network. Thus, in practice, it suffices to train a couple of neural networks (one for each relevant family of payoffs) and then to use the pre-trained neural networks for valuation-purposes. We refer to  Remark~\ref{rem_approach_in_practice}~(e) \black for further possible examples of parametric families $\{\Phi_\theta, \theta \in \Theta\}$, where we highlight that only a few neural networks are necessary to cover the most relevant payoff functions of financial derivatives.  \black

In Section~\ref{sec_super-hedging}, we first present our approach in a very general setting including multiple assets, multiple time steps, as well as market frictions. To justify our methodology, we show, by proving a continuous relationship between market data and resultant model-free price bounds, that the universal approximation theorem from \cite{hornik1991approximation} is indeed applicable, see Theorem~\ref{thm_convergence}.  { More precisely, under some continuity assumptions on the parametric family of payoff functions $\theta \mapsto \Phi_\theta$ which are typically satisfied for the relevant families of payoff functions in finance (see Section~\ref{sec_market_data}), we prove that both upper and lower arbitrage-free price bound depend continuously on the relevant inputs: strike prices and bid-ask prices of the liquid call and put options, the current prices of the underlying stocks, as well as the parameter determining the payoff function. This, together with the universal approximation property of neural networks proves that a single neural network can indeed learn the arbitrage-free price bounds of a parametric family of payoff functions.} 
To the best of our knowledge, Theorem~\ref{thm_convergence} (as well as Theorem~\ref{lem_approx_mot}) is the first result which proves a continuous relationship between the respective inputs and outputs. This result justifies to learn model-free price bounds by neural networks, and hence provides an important novel contribution to the field.
\black In particular, this means that it is possible to train a single neural network offline on past market data and then to use it online with current market data to compute price bounds of each financial derivative $\Phi_\theta, \theta \in \Theta$. Additionally, we show accuracy and tractability of our presented approach in various high-dimensional relevant examples involving real market data.

In Section~\ref{sec_mot}, we show that the methodology can also be applied to compute two-marginal martingale optimal transport (MOT) problems, see also Figure~\ref{fig_diagram_mot}
for an illustration of the approach where instead of market data entire marginal distributions are the input to the neural network, we refer to Theorem~\ref{lem_approx_mot} for the novel theoretical justification of that approach. 
The knowledge about marginal distributions can be motivated by the findings from \cite{breeden1978prices} which lead to the insight that
complete information about the marginal distributions is equivalent to the knowledge of prices of call options written on the underlying securities for a whole continuum of strikes.

 We further show within several examples the applicability of the presented approach. Mathematical proofs of the theoretical results are provided in Section~\ref{appendix_proof}.

\section{Approximating model-free price bounds with neural networks}\label{sec_super-hedging}
In this section we present an arbitrage-free approach to determine model-free price bounds of a possibly high-dimensional financial derivative when real market data is given. In addition to prices of underlying securities we observe bid and ask prices of call and put options written on these securities , where bid and ask prices refer to the quotations for which the options can be sold and bought\black. Moreover, we explain how model-independent price bounds can be approximated through neural networks.

\subsection{Model-independent valuation of derivatives}
We consider at the present time $t_0 \in [0,\infty)$ $d\in \N$ underlying securities and $n \in \N$ future times $t_0<t_1<\cdots<t_n<\infty$, i.e., the underlying process is given by
\[
\boldsymbol{S}:=\left(\B{S^1},\dots,\B{S^d}\right)=\left(\B{S_{t_1}},\dots, \B{S_{t_n}}\right)=(S_{t_i}^k)_{i=1,\dots,n}^{k=1,\dots,d}
\]
with $\B{S^k} := (S^k_{t_i})_{i=1,\dots,n}$ denoting the $k$-th underlying security and $\B{S_{t_i}}:= (S^k_{t_i})_{k=1,\dots,d}$ denoting the values of the underlying securities at time $t_i$. The process $\boldsymbol{S}$ is modelled as the canonical process on $\R_+^{nd}$ equipped with the Borel $\sigma$-algebra denoted by $\mathcal{B}(\R_+^{nd})$, i.e., for all $i=1,\dots,n$, $k=1,\dots,d$ we have 
\begingroup\makeatletter\def\f@size{9}\check@mathfonts
\def\maketag@@@#1{\hbox{\m@th\normalsize\normalfont#1}}
\[
S_{t_i}^k(\B{s})=s_{i}^k \text{ for all } \B{s}=(s_1^1,\dots,s_1^d,\dots,s_n^1,\dots,s_n^d)\in \R_+^{nd}.
\]
\endgroup
As we want to consider real market data, we cannot - as usual in a vast majority of the mathematical literature on model-independent finance - neglect bid-ask spreads as well as transaction costs. 
Thus, we assume that option prices do not necessarily coincide for buyer and seller, instead we take into account a bid price and an ask price. Let $k \in \{1,\dots,d\},~i \in \{1,\dots,n\}$, $j \in \{1,\dots,n_{ik}^{\operatorname{opt}}\}$, where $n_{ik}^{\operatorname{opt}}$ denotes the amount of tradable put and call options\footnote{We assume the same amount of traded put and call options. This simplifies the presentation, but can without difficulties be extended to a more general setting.} with maturity $t_i$ written on $S^k_{t_i}$. Then a call option on $\B{S^k}$ with maturity $t_i$ for strike $K_{ijk}^{\operatorname{call}} \in  \R_+$ can be bought at price ${\pi^+_{\operatorname{call},i,j,k}}$ and be sold at price ${\pi^-_{\operatorname{call},i,j,k}}$.  As a call option entitles the owner of the option to \emph{buy} the underlying security at price $K_{ijk}^{\operatorname{call}}$ at time $t_i$ it is only exercised if the difference between underlying security and strike $K_{ijk}^{\operatorname{call}}$ is positive, and therefore possesses the payoff $\max\left\{S_{t_i}^k-K_{ijk}^{\operatorname{call}},0\right\}$.   \black Similarly, bid and ask prices for traded put options are denoted by ${\pi^-_{\operatorname{put},i,j,k}}$ and ${\pi^+_{\operatorname{put},i,j,k}}$, respectively.    Put options give the right to \emph{sell} the underlying security at price $K_{ijk}^{\operatorname{put}}$ at time $t_i$. Hence, put options are only exercised if the difference between strike $K_{ijk}^{\operatorname{call}}$  and underlying security is positive, leading to the payoff $\max\left\{K_{ijk}^{\operatorname{put}}-S_{t_i}^k,0\right\}$.  \black \\
Moreover, we assume proportional transaction costs, similar to the approaches in \cite[Section 3.1.1.]{cheridito2017duality} and \cite{dolinsky2014robust}. This means, at each time $t_i$, after having observed the values $\B{{S}_{t_1}},\dots,\B{{S}_{t_i}}$,  rearranging a dynamic self-financing\footnote{  Self-financing means that at any time there is neither consumption nor any money injection. The profit of the trading strategy is purely a consequence of the trading in the underlying security. \black } trading position in the underlying security from\footnote{For $m,n \in \N$ and some set $K \subseteq \R^m$, we denote  by $B\left(K,\R^n\right)$ the set of all functions $f:K \rightarrow \R^n$ which are $\mathcal{B}(K)$/$\mathcal{B}(\R^n)$-measurable, whereas $C(K,\R^n)$ denotes the set of all continuous functions $f:K\rightarrow \R^n$.} $\Delta_{i-1 }^k \in B\left(\R_+^{(i-1)d},\R\right)$, which was the trading position after having observed only  the values $\B{{S}_{t_{1}}},\dots,\B{{S}_{t_{i-1}}}$, to a new trading position $\Delta_{i}^k \in B(\R_+^{id},\R)$, causes transaction costs\footnote{Here also different approaches to measure transaction costs would have been possible. Compare for example the presentations in \cite{buehler2019deep} and \cite{cheridito2017duality}.} of 
$$
\kappa |S_{t_i}^k| \left|\Delta_{i}^k(\boldsymbol{S}_{t_i},\dots,\boldsymbol{S}_{t_1})-\Delta_{i-1}^k(\boldsymbol{S}_{t_{i-1}},\dots,\boldsymbol{S}_{t_1})\right|
$$ 
for some fixed $\kappa \geq  0$. We denote for each $k=1,\dots,d$ by $S_{t_0}^k \in \R_+$ the observable and therefore deterministic current value of the $k$-th security, also called the \emph{spot price} of $\B{S^k}$. Then, given spot prices $\B{S_{t_0}}=(S_{t_0}^1,\dots,S_{t_0}^d)$ and strikes $\boldsymbol{K}:=\left((K_{ijk}^{\operatorname{call}})_{i,j,k}, (K_{ijk}^{\operatorname{put}})_{i,j, k}\right)$, we consider  trading \black strategies with profits of the form\footnote{To simplify the presentation we assume zero interest rates and zero dividend yields.}
\begingroup\makeatletter\def\f@size{8}\check@mathfonts
\def\maketag@@@#1{\hbox{\m@th\normalsize\normalfont#1}}
\begin{equation}\label{eq_phi}
\begin{aligned}
\Psi^{(\boldsymbol{K},\B{S_{t_0}})}_{(a,\B{c_{ijk}},\B{p_{ijk}},\Delta_i^k)}(\boldsymbol{S}):=
a&+\sum_{i=1}^n\sum_{k=1}^d\sum_{j=1}^{n_{ik}^{\operatorname{opt}}}\left({c_{ijk}^+}-{c_{ijk}^-}\right)\max\left\{S_{t_i}^k-K_{ijk}^{\operatorname{call}},0\right\}+\sum_{i=1}^n\sum_{k=1}^d\sum_{j=1}^{n_{ik}^{\operatorname{opt}}}\left({p_{ijk}^+}-{p_{ijk}^-}\right)\max\left\{K_{ijk}^{\operatorname{put}}-S_{t_i}^k,0\right\}\\
&+\sum_{k=1}^d\sum_{i=0}^{n-1} \bigg(\Delta_i^k(\boldsymbol{S}_{t_i},\dots,\boldsymbol{S}_{t_1})\left(S_{t_{i+1}}^k-S_{t_i}^k\right)-\kappa|S_{t_i}^k|\left|\Delta_{i}^k(\boldsymbol{S}_{t_i},\dots,\boldsymbol{S}_{t_1})-\Delta_{i-1}^k(\boldsymbol{S}_{t_{i-1}},\dots,\boldsymbol{S}_{t_1})\right|\bigg)
\end{aligned}
\end{equation}
\endgroup
for an amount of cash $a \in \R$, non-negative long positions $c_{ijk}^+,p_{ijk}^+ \in \R_+$ and non-negative short positions $c_{ijk}^-,p_{ijk}^-\in \R_+$ in call and put options, respectively, for $j=1,\dots,n_{ik}^{\operatorname{opt}}$, $i=1,\dots,n$, $k=1,\dots,d$. In equation \eqref{eq_phi} and for the rest of the paper we use the abbreviations $\B{c_{ijk}}=(c_{ijk}^+,c_{ijk}^-) \in \R_+^2$ and $\B{p_{ijk}}=(p_{ijk}^+,p_{ijk}^-) \in \R_+^2$. Further, the strategies involve self-financing trading positions $\Delta_i^k \in B(\R_+^{id},\R)$ with the convention $\Delta_0^k \in \R$, i.e., to be deterministic, as well as $\Delta_{-1}^k :\equiv 0$. The costs for setting up the position $\Psi^{(\boldsymbol{K},\B{S_{t_0}})}_{(a,\B{c_{ijk}},\B{p_{ijk}},\Delta_i^k)}$ with respect to the bid-ask prices 
\begin{align*}
\boldsymbol{\pi}:=\bigg(&\left({\pi}_{\operatorname{call},i,j,k}^-\right)_{i,j,k},\left({\pi}_{\operatorname{call},i,j,k}^+\right)_{i,j,k},\left({\pi}_{\operatorname{put},i,j,k}^-\right)_{i,j,k},\left({\pi}_{\operatorname{put},i,j,k}^+\right)_{i, j, k}\bigg)
\end{align*}
are given by
\begin{equation}\label{eq_definition_cost_functional}
\begin{aligned}
\mathcal{C}\bigg(\Psi^{(\boldsymbol{K},\B{S_{t_0}})}_{(a,\B{c_{ijk}},\B{p_{ijk}},\Delta_i^k)},\boldsymbol{\pi}\bigg):=
a&+\sum_{i=1}^{n}\sum_{k=1}^d\sum_{j=1}^{n_{ik}^{\operatorname{opt}}} \left(c_{ijk}^+\pi_{\operatorname{call},i,j,k}^+-c_{ijk}^-\pi_{\operatorname{call},i,j,k}^-\right) +\sum_{i=1}^{n}\sum_{k=1}^d\sum_{j=1}^{n_{ik}^{\operatorname{opt}}} \left(p_{ijk}^+\pi_{\operatorname{put},i,j,k}^+-p_{ijk}^-\pi_{\operatorname{put},i,j,k}^-\right).
\end{aligned}
\end{equation}
For a strategy $\Psi^{(\boldsymbol{K},\B{S_{t_0}})}_{(a,\B{c_{ijk}},\B{p_{ijk}},\Delta_i^k)}$ with parameters $(a,\B{c_{ijk}},\B{p_{ijk}},\Delta_i^k)_{i,j,k}$ we introduce the function
\begingroup\makeatletter\def\f@size{9}\check@mathfonts
\def\maketag@@@#1{\hbox{\m@th\normalsize\normalfont#1}}
\begin{align*}
\Sigma(\B{c_{ijk}},\B{p_{ijk}},\Delta_i^k):=&\sum_{i=1}^{n}\sum_{k=1}^d\sum_{j=1}^{n_{ik}^{\operatorname{opt}}}(c_{ijk}^++c_{ijk}^-+p_{ijk}^++p_{ijk}^-)+\sum_{k=1}^d |\Delta_0^k|+\sum_{i=1}^{n-1}\sum_{k=1}^d\|\Delta_i^k\|_{\infty},
\end{align*}
\endgroup
where $\|\cdot\|_{\infty}$ denotes the supremum norm. Imposing a universal upper bound  on the function $\Sigma$, i.e., $\Sigma(\cdot)\leq \mathfrak{B}<\infty$ for some  $\mathfrak{B} \in \R_+$, relates to a restriction on the maximal position an investor is willing/allowed to invest. We want to valuate a derivative with payoff $\Phi \in B(\R_+^{nd},\R)$.
Hence, given strikes $\boldsymbol{K}$, spot prices $\B{S_{t_0}}$, and bid-ask prices $\boldsymbol{\pi}$, our goal is to solve the following super-hedging problem
%\footnote{The inequality $\Psi^{(\boldsymbol{K},\B{S_{t_0}})}_{(a,\B{c_{ijk}},\B{p_{ijk}},\Delta_i^k)} \geq \Phi$ means $\Psi^{(\boldsymbol{K},\B{S_{t_0}})}_{(a,\B{c_{ijk}},\B{p_{ijk}},\Delta_i^k)}(\B{s}) \geq \Phi(\B{s})$ for all $\B{s} \in \R_+^{nd}$.}
\begingroup\makeatletter\def\f@size{8}\check@mathfonts
\def\maketag@@@#1{\hbox{\m@th\normalsize\normalfont#1}}
\begin{equation}\label{eq_subhedging_bidask}
\begin{aligned}
\overline{D}^{\mathfrak{B},B}_{(\boldsymbol{K},\boldsymbol{\pi},\B{S_{t_0}})}\left(\Phi\right):=\inf_{\substack{a\in \R, \\ \B{c_{ijk}} ,
\B{p_{ijk}}\in \R^2_+,\\ (\Delta_i^k) \in B(\R_+^{id},\R)}}&\bigg\{\mathcal{C}\left(\Psi^{(\boldsymbol{K},\B{S_{t_0}})}_{(a,\B{c_{ijk}},\B{p_{ijk}},\Delta_i^k)},\boldsymbol{\pi}\right)\text{ s.t. }\Psi^{(\boldsymbol{K},\B{S_{t_0}})}_{(a,\B{c_{ijk}},\B{p_{ijk}},\Delta_i^k)}(\B{s}) \geq \Phi(\B{s}) \text{ for all } \B{s} \in [0,B]^{nd},\\
&\hspace{7cm} \text{ and } \Sigma(\B{c_{ijk}},\B{p_{ijk}},\Delta_i^k) \leq \mathfrak{B}\bigg\},
\end{aligned}
\end{equation}
\endgroup
for some bounds $\mathfrak{B},B \in (0,\infty]$, where the bound $B$ corresponds to a restriction of the form $S_{t_i}^k\leq B$ for all $i,k$. It is economically reasonable to assume a large but finite $B$ for the securities under consideration, since it imposes no severe restriction and reduces artificial high prices which were not realistic in practice.\footnote{We still allow a priori $\mathfrak{B},B=\infty$ in case one does not want to make restrictions on the trading strategies or exclude unbounded price paths.} A solution of \eqref{eq_subhedging_bidask} defines the largest model-independent arbitrage-free price of the derivative $\Phi$ and simultaneously comes with a strategy that enables to exploit arbitrage if prices for $\Phi$ lie above this price bound.

In analogy to \eqref{eq_subhedging_bidask}, the smallest model-independent arbitrage-free price of $\Phi$ is given by the corresponding sub-hedging problem
\begingroup\makeatletter\def\f@size{8}\check@mathfonts
\def\maketag@@@#1{\hbox{\m@th\normalsize\normalfont#1}}
\begin{align*}
\underline{D}^{\mathfrak{B},B}_{(\boldsymbol{K},\boldsymbol{\pi},\B{S_{t_0}})}\left(\Phi\right):=\sup_{\substack{a\in \R, \\ \B{c_{ijk}} ,
\B{p_{ijk}}\in \R^2_+,\\ (\Delta_i^k) \in B(\R_+^{id},\R)}}&\bigg\{\mathcal{C}\left(\Psi^{(\boldsymbol{K},\B{S_{t_0}})}_{(a,\B{c_{ijk}},\B{p_{ijk}},\Delta_i^k)},\boldsymbol{\pi}\right)\text{ s.t. }\Psi^{(\boldsymbol{K},\B{S_{t_0}})}_{(a,\B{c_{ijk}},\B{p_{ijk}},\Delta_i^k)}(\B{s}) \leq \Phi(\B{s}) \text{ for all } \B{s} \in [0,B]^{nd},\\
&\hspace{7cm} \text{ and } \Sigma(\B{c_{ijk}},\B{p_{ijk}},\Delta_i^k) \leq \mathfrak{B}\bigg\}. %\\
%&=-\overline{D}^{\mathfrak{B},B}_{(\boldsymbol{K},\boldsymbol{\pi},\B{S_{t_0}})}\left(-\Phi\right).
\end{align*}
\endgroup

%\begin{rem}~\label{rem_setting}
%Under some regularity assumptions on $\Phi$, using classical super-hedging duality results, one can show that 
%\[
%\underline{D}^{\mathfrak{B},B}_{(\boldsymbol{K},\boldsymbol{\pi},\B{S_{t_0}})}\left(\Phi\right)=-\overline{D}^{\mathfrak{B},B}_{(\boldsymbol{K},\boldsymbol{\pi},\B{S_{t_0}})}\left(-\Phi\right),
%\]
%see for example \cite{neufeld2020model}.
%\end{rem}

\subsection{Training a neural network for option valuation}\label{sec_training}
Next, we focus on the supervised learning approach we pursue in this paper.
This approach is implemented using neural networks, thus we start this section with a short exposition on neural networks which can be found in similar form in \cite{aquino2019bounds}, \cite{baes2019low}, \cite{buehler2019deep},~\cite{eckstein2019robust}, \cite{eckstein2019computation}, \cite{eckstein2020robust}, or in every standard textbook on the topic (e.g. \cite{bengio2009learning}, \cite{goodfellow2016deep}, or \cite{hassoun1995fundamentals}).
\subsubsection{Neural networks}
In the following we consider a fully-connected neural network which is for input dimension $d_{\operatorname{in}} \in \N$, output dimension $d_{\operatorname{out}} \in \N$, and number of layers $l \in \N$ defined as a function of the form
\begin{equation}\label{eq_nn_function}
\begin{aligned}
\R^{d_{\operatorname{in}}} &\rightarrow \R^{d_{\operatorname{out}}}\\
\boldsymbol{x} &\mapsto \boldsymbol{A_l} \circ \boldsymbol{\varphi}_l \circ \boldsymbol{A_{l-1}} \circ \cdots \circ \boldsymbol{\varphi}_1 \circ \boldsymbol{A_0}(\boldsymbol{x}),
\end{aligned}
\end{equation}
where $(\boldsymbol{A_i})_{i=0,\dots,l}$ are functions of the form 
%%%%%%%%%%%%%%%%%% WIEDER REINMACHEN!
\begingroup\makeatletter\def\f@size{9}\check@mathfonts
\def\maketag@@@#1{\hbox{\m@th\normalsize\normalfont#1}}
\begin{equation}\label{eq_A_i_def} 
\begin{aligned}
\B{A_0}: \R^{d_{{\operatorname{in}}}} \rightarrow \R^{h_1},~\B{A_i}:\R^{h_i}\rightarrow \R^{h_{i+1}}\text{ for } i =1,\dots,l-1, \text{(if } l>1),~\B{A_l} : \R^{h_l} \rightarrow \R^{d_{\operatorname{out}}},
\end{aligned}
\end{equation}
\endgroup
and where for  $i=1,\dots,l$ we have $\boldsymbol{\varphi}_i(x_1,\dots,x_{h_i})=\left(\varphi(x_1),\dots,\varphi(x_{h_i})\right)$,  with $\varphi:\R \rightarrow \R$ being a non-constant function called \emph{activation function}. Here $\boldsymbol{h}=(h_1,\dots,h_{l}) \in \N^{l}$ denotes the dimensions (the number of neurons) of the hidden layers, also called \emph{hidden dimension}. Moreover, for all $i=0,\dots,l$, the function $\B{A_i}$ is assumed to have an affine structure of the form
\[
\boldsymbol{A_i}(\boldsymbol{x})=\boldsymbol{M_i} \boldsymbol{x} + \boldsymbol{b_i}
\]
for some matrix $\boldsymbol{M_i} \in \R^{ h_{i+1} \times h_{i}}$ and some vector $\boldsymbol{b_i}\in \R^{h_{i+1}}$, where $h_0:=d_{\operatorname{in}}$ and $h_{l+1}:=d_{\operatorname{out}}$. 
 We then denote by $\mathfrak{N}_{d_{\operatorname{in}},{d_{\operatorname{out}}}}^{l,\boldsymbol{h}}$ the set of all neural networks with input dimension ${d_{\operatorname{in}}}$, output dimension ${d_{\operatorname{out}}}$, $l$ hidden layers, and hidden dimension $\boldsymbol{h}$.
 Moreover, we consider the set of all neural networks with input dimension $d_{\operatorname{in}}$, output dimension $d_{\operatorname{out}}$, a fixed amount of $l$ hidden layers, but unspecified hidden dimension
 \[
 \mathfrak{N}_{d_{\operatorname{in}},{d_{\operatorname{out}}}}^{l}:=\bigcup_{\boldsymbol{h} \in \N^l}\mathfrak{N}_{d_{\operatorname{in}},{d_{\operatorname{out}}}}^{l,\boldsymbol{h}},
 \]
as well as the set of all neural networks mapping from $\R^{d_{\operatorname{in}}}$ to $\R^{d_{\operatorname{out}}}$ with an unspecified amount of hidden layers
\[
 \mathfrak{N}_{d_{\operatorname{in}},{d_{\operatorname{out}}}}:=\bigcup_{l \in \N}\mathfrak{N}_{{d_{\operatorname{in}}},{d_{\operatorname{out}}}}^{l}.
\]
One fundamental result that is of major importance for the approximation of functions through neural networks is the universal approximation theorem from e.g. \cite[Theorem 2]{hornik1991approximation}, stating that, given some mild assumption on the activation function $\varphi$, every continuous function can be approximated arbitrarily well by neural networks on compact subsets.
\begin{prop}[Universal approximation theorem for continuous functions~\cite{hornik1991approximation}]\label{lem_universal}
Assume that $\varphi \in C(\R,\R)$ and that $\varphi$ is not constant, then for any compact $\K \subset \R^{d_{\operatorname{in}}} $ the set $\mathfrak{N}_{d_{\operatorname{in}},{d_{\operatorname{out}}}}|_{\K}$ is dense in ${C}(\K,\R^{d_{\operatorname{out}}})$ w.r.t. the topology of uniform convergence on $C(\K,\R^{d_{\operatorname{out}}})$.
\end{prop}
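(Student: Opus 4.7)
The plan is to prove density by a classical three-stage reduction: from vector output to scalar output, from multivariate approximation to a one-dimensional problem via ridge functions, and finally to density of translate-dilates of $\varphi$ in $C$ on compact intervals.

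For the first stage I would write any $\boldsymbol{f}\in C(\K,\R^{d_{\operatorname{out}}})$ in coordinates $f_1,\ldots,f_{d_{\operatorname{out}}}$ and approximate each component by a scalar network in $\mathfrak{N}_{d_{\operatorname{in}},1}$. Padding their hidden layers to a common architecture and concatenating the individual output affine maps into a single matrix $\boldsymbol{A}_l\in \R^{d_{\operatorname{out}}\times h_l}$ yields a single network in $\mathfrak{N}_{d_{\operatorname{in}},d_{\operatorname{out}}}$ uniformly approximating $\boldsymbol{f}$. It therefore suffices to establish that the span $\mathcal{N}:=\operatorname{span}\{\boldsymbol{x}\mapsto \varphi(\boldsymbol{w}\cdot\boldsymbol{x}+b):\boldsymbol{w}\in\R^{d_{\operatorname{in}}},\,b\in\R\}$, which corresponds to single-hidden-layer networks in $\mathfrak{N}_{d_{\operatorname{in}},1}$, is dense in $C(\K,\R)$.

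For the second stage I would consider the larger ridge-function span $\mathcal{R}:=\operatorname{span}\{\boldsymbol{x}\mapsto h(\boldsymbol{w}\cdot\boldsymbol{x}):\boldsymbol{w}\in\R^{d_{\operatorname{in}}},\,h\in C(\R,\R)\}$. Choosing $h(t)=t^k$ puts $(\boldsymbol{w}\cdot\boldsymbol{x})^k$ into $\mathcal{R}$, and polarization of these forms recovers every symmetric multilinear form, hence every polynomial on $\R^{d_{\operatorname{in}}}$. Since polynomials are uniformly dense in $C(\K,\R)$ by the multivariate Weierstrass theorem, $\mathcal{R}$ is dense in $C(\K,\R)$, and the problem collapses to showing that any $h\in C(\R,\R)$ is uniformly approximated on each compact interval by linear combinations from $\operatorname{span}\{x\mapsto \varphi(\lambda x+\beta):\lambda,\beta\in\R\}$.

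The hard part will be this final one-dimensional step, where the hypotheses on $\varphi$ are used in full force. The strategy I would follow is the one underlying Hornik's argument: if $\varphi$ were smooth and not a polynomial, symmetric finite-difference quotients of translates of $\varphi$ would approximate the derivatives $\varphi^{(k)}(\beta)$ for every $k\in\N$, which combined with appropriate dilations produces all monomials $x^k$ on any $[-R,R]$, and Weierstrass then delivers the conclusion. For a merely continuous $\varphi$ I would mollify by convolving with a smooth compactly supported kernel; expressing the convolution as a Riemann-sum limit of translates of $\varphi$ places the smoothed function inside the uniform closure of our span on the interval of interest, reducing matters to the smooth case. The delicate point is ensuring that such mollifications are not all simultaneously polynomial of uniformly bounded degree, as this would collapse the achievable degrees: this is precisely where the non-constancy of $\varphi$ is invoked, together with a compactness-type argument on the degree, to guarantee that sufficient regularity plus non-polynomiality can be extracted to run the finite-difference step.
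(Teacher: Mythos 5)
First, note that the paper does not prove this proposition: it is quoted from \cite{hornik1991approximation}, with the extension to vector-valued outputs and several hidden layers delegated to \cite{kidger2020universal}, so there is no in-paper argument to compare yours against. On its own terms, your architecture is the classical Leshno--Lin--Pinkus--Schocken route rather than Hornik's original functional-analytic one: the reduction from $\R^{d_{\operatorname{out}}}$-valued to scalar targets by padding and stacking networks is fine, the reduction to ridge functions via polarization of $(\boldsymbol{w}\cdot\boldsymbol{x})^k$ together with the multivariate Weierstrass theorem is correct, and mollification plus finite differences is the right tool for the one-dimensional step with a merely continuous activation.

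The genuine gap is in the last step, exactly where you flag it as delicate, and it cannot be closed under the stated hypotheses. Non-constancy of $\varphi$ does not prevent $\varphi$ from being a non-constant polynomial. If $\varphi$ is a polynomial of degree $m$, then every mollification $\varphi\ast\rho$ is again a polynomial of degree at most $m$, every finite-difference quotient you form is a polynomial of degree at most $m$, and indeed the whole span $\operatorname{span}\{x\mapsto\varphi(\lambda x+\beta):\lambda,\beta\in\R\}$ lies in the finite-dimensional space of polynomials of degree at most $m$; no compactness argument on the degree can manufacture $x^{m+1}$. The correct single-hidden-layer hypothesis is that $\varphi$ is \emph{not a polynomial}; Hornik's Theorem 2 instead assumes $\varphi$ bounded and non-constant, and boundedness is precisely what excludes non-constant polynomials. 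As printed, the proposition omits boundedness, and for $\varphi(x)=x$ the claimed density fails even allowing arbitrarily many hidden layers, since every such network is affine. So to finish you must import the boundedness (or non-polynomiality) assumption of the cited theorem; once that is in hand, the standard closing move is to show that if every mollification $\varphi\ast\rho$ with $\rho\in C_c^{\infty}$ were a polynomial then $\varphi$ itself would be one, whence some smooth mollification is non-polynomial and your finite-difference argument applies to it.
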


Popular examples for activation functions are the \emph{ReLU} function given by $\varphi(x):=\max\{x,0\}$ or the logistic function $\varphi(x):=1/(1+e^{-x})$, which fulfil the assumptions of Proposition~\ref{lem_universal}. Further, we remark that the original statement from \cite[Theorem 2]{hornik1991approximation} only covers output dimension ${d_{\operatorname{out}}}=1$, and $l=1$ hidden layer, but can indeed be generalized to the above statement, compare e.g. \cite[Theorem 3.2.]{kidger2020universal}.

\subsubsection{Approximation of the super-replication functional through neural networks}
We consider for $i=1,\dots,\mathcal{S}$, where $\mathcal{S}\in\N$ denotes the number of samples, input data of the form
$$
\B{X_i}=\left(\boldsymbol{K},\boldsymbol{\pi},\B{S_{t_0}},\theta  \right)
$$
and we aim at predicting via an appropriately trained neural network the following target
$$
 \B{Y_i} = \left(\underline{D}^{\mathfrak{B},B}_{(\boldsymbol{K},\boldsymbol{\pi},\B{S_{t_0}})}\left(\Phi_\theta\right),~\overline{D}^{\mathfrak{B},B}_{(\boldsymbol{K},\boldsymbol{\pi},\B{S_{t_0}})}\left(\Phi_\theta\right)\right),
$$
for a parametrized family $\{\Phi_\theta, \theta \in \Theta\}$.
 If we are additionally interested in predicting the optimal super-replication strategy, then $\B{\widetilde{Y_i}}$ contains instead the associated parameters of the strategies, i.e.,
\[
\B{\widetilde{Y_i}} = \left(a,(\B{c_{ijk}})_{i,j,k},(\B{p_{ijk}})_{i,j,k},(\Delta_i^k)_{i,k} \right),
\]
for the minimal super-replication strategy $\Psi^{(\boldsymbol{K},\B{S_{t_0}})}_{(a,\B{c_{ijk}},\B{p_{ijk}},\Delta_i^k)}$ (and analogue also for the maximal sub-replication strategy), which implicitly also contains the minimal super-replication price by calculating the corresponding cost using \eqref{eq_definition_cost_functional}. However, after having trained a neural network to predict $\B{\widetilde{Y_i}}$ given market data $\B{X_i}$, due to a different training error, the implied price bounds are expected to differ to a larger extent from  $\B{Y_i}$ than those from a neural network which directly predicts the prices $\B{Y_i}$, see also Example~\ref{exa_real_example_strategies} in which we compare both approaches.

According to Proposition~\ref{lem_universal}, a trained neural network can be used to predict price bounds and optimal strategies if price bounds and strategies, respectively, are continuous functions of the input, i.e., of $\B{X_i}$.
The following result stated in Theorem~\ref{thm_convergence} ensures that, under mild assumptions which we discuss subsequently in Remark~\ref{rem_assumptions_thm}, this requirement is fulfilled. For this, we denote for all $k \in \N$  by  $\|\cdot \|_k$ some norm on $\R^k$. Since all norms on Euclidean spaces are equivalent, the specific choice of the norm is irrelevant for the following assertions. The induced metric for $k\in \N$ is denoted by $d_k(\boldsymbol{x},\boldsymbol{y})=\|\boldsymbol{x}-\boldsymbol{y}\|_k$. Moreover, we define for every $B \in (0,\infty)$ the norm $\|f\|_{\infty,B}:=\sup_{x \in [0,B]^{nd}}|f(x)|$ and $d_{\infty,B}(f,g):=\|f-g\|_{\infty,B}$ for $f,g \in C(\R_+^{nd},\R)$. In the case $B=\infty$ we set
\[
d_{\infty,\infty}(f,g):=\frac{\|f-g\|_{\infty}}{1+\|f-g\|_{\infty}}.
\]
 
To the best of our knowledge, the following Theorem~\ref{thm_convergence} proves for the first time a continuous relation between the market inputs and the corresponding price bounds. This novel result justifies to apply neural networks to determine model-independent price bounds and hence provides a significant contribution to the literature. \black 
\begin{thm}\label{thm_convergence}
Let $B\in (0,\infty],\mathfrak{B} \in (0,\infty)$, and $M:=\sum_{i=1}^{n}\sum_{k=1}^d n_{ik}^{\operatorname{opt}}$. 
 Let $\{\Phi_\theta, \theta \in \Theta\},$ for some $\Theta \subset \R^p$ and $p \in \N$, be a (parametric) family of functions in $C(\R_+^{nd},\R)$ such that 
\begin{equation}\label{eq_defn_theta_map}
\begin{aligned}
\left(\Theta,d_p\right) &\rightarrow \left(C(\R_+^{nd},\R),d_{\infty,B}\right)\\
\theta &\mapsto \Phi_\theta
\end{aligned}
\end{equation}
is continuous and let $N_{\operatorname{input}}:=2M+4M+d+p$. Then, the following holds.

\begin{itemize}
\item[(a)]
Let $\mathbb{K}_1 \subset \R_+^{2M}\times \R^{4M}\times \R_+^d\times \Theta$ be a compact set such that both
\begin{equation}\label{eq_no_arbitrage}
\underline{D}^{\mathfrak{B},B}_{(\boldsymbol{K},\boldsymbol{\pi},\B{S_{t_0}})}\left(\Phi_{\theta}\right),\overline{D}^{\mathfrak{B},B}_{(\boldsymbol{K},\boldsymbol{\pi},\B{S_{t_0}})}\left(\Phi_{\theta}\right)\in (-\infty,\infty)
\end{equation}
holds for all $(\boldsymbol{K}, \B{\pi}, \B{S_{t_0}}, \theta) \in \mathbb{K}_1$. Then, the map 
\begingroup\makeatletter\def\f@size{8}\check@mathfonts
\def\maketag@@@#1{\hbox{\m@th\normalsize\normalfont#1}}
\begin{align*}
\left(\mathbb{K}_1,d_{N_{\operatorname{input}}}\right) &\rightarrow \left(\R^2,d_2\right)\\
\left(\boldsymbol{K}, \B{\pi}, \B{S_{t_0}}, \theta\right) &\mapsto \left(\underline{D}^{\mathfrak{B},B}_{(\boldsymbol{K},\boldsymbol{\pi},\B{S_{t_0}})}\left(\Phi_{\theta}\right),\overline{D}^{\mathfrak{B},B}_{(\boldsymbol{K},\boldsymbol{\pi},\B{S_{t_0}})}\left(\Phi_{\theta}\right)\right)
\end{align*}
\endgroup
is continuous.
\item[(b)]
Let $\mathbb{K}_1$ be defined as in (a). Then, for all $\varepsilon>0$ there exists a neural network $\mathcal{N}_1 \in \mathfrak{N}_{N_{\operatorname{input}},2}$ such that for all $(\boldsymbol{K}, \B{\pi},\B{S_{t_0}}, \theta) \in \mathbb{K}_1$ it holds
\begingroup\makeatletter\def\f@size{10}\check@mathfonts
\def\maketag@@@#1{\hbox{\m@th\normalsize\normalfont#1}}
\begin{equation}\label{eq_neural_net_approx}
\begin{aligned}
&\bigg\|\mathcal{N}_1(\boldsymbol{K},\boldsymbol{\pi},\B{S_{t_0}},\theta)-\left(\underline{D}^{\mathfrak{B},B}_{(\boldsymbol{K},\boldsymbol{\pi},\B{S_{t_0}})}\left(\Phi_\theta\right),\overline{D}^{\mathfrak{B},B}_{(\boldsymbol{K},\boldsymbol{\pi},\B{S_{t_0}})}\left(\Phi_\theta\right)\right)\bigg\|_2<\varepsilon.
\end{aligned}
\end{equation}
\endgroup
\item[(c)]
Let $n=1$. Let $\mathbb{K}_2 \subset \R_+^{2M}\times \R^{4M}\times \R_+^d\times \Theta$ be a compact set such that for all $(\boldsymbol{K}, \B{\pi},\B{S_{t_0}},\theta) \in \mathbb{K}_2$ we have that \eqref{eq_no_arbitrage} holds and $\overline{D}^{\mathfrak{B},B}_{(\boldsymbol{K},\boldsymbol{\pi},\B{S_{t_0}})}(\Phi_\theta)$ is  attained by a unique strategy
\[
\left(a^*,(c^*_{1jk})_{j,k},(p^*_{1jk})_{j,k},({\Delta_0^k}^*)_k\right)\left(\boldsymbol{K},\B{\pi}, \B{S_{t_0}}, \theta\right)
\]
satisfying
\begin{equation}\label{eq_boundedness_assumption}
\sup_{(\boldsymbol{K},\B{\pi}, \B{S_{t_0}}, \theta) \in \mathbb{K}_2}\left| a^*\left(\boldsymbol{K}, \B{\pi}, \B{S_{t_0}}, \theta\right)\right|< \infty.
\end{equation}
Then the map
\begingroup\makeatletter\def\f@size{9}\check@mathfonts
\def\maketag@@@#1{\hbox{\m@th\normalsize\normalfont#1}}
\begin{align*}
\left(\mathbb{K}_2,d_{N_{\operatorname{input}}}\right) &\rightarrow \left(\R^{1+4M+d},d_{1+4M+d}\right)\\
\left(\boldsymbol{K},  \B{\pi},\B{S_{t_0}}, \theta\right) &\mapsto \bigg(a^*,(c^*_{1jk})_{j,k},(p^*_{1jk})_{j,k},({\Delta_0^k}^*)_k\bigg)\left(\boldsymbol{K}, \B{\pi}, \B{S_{t_0}}, \theta\right)
\end{align*}
\endgroup
is continuous.
\item[(d)]
Let $n=1$ and let $\mathbb{K}_2$ be defined as in (c). Then, for all $\varepsilon>0$ there exists a neural network $\mathcal{N}_2 \in \mathfrak{N}_{N_{\operatorname{input}},1+4M+d}$ such that for all $(\boldsymbol{K}, \B{\pi}, \B{S_{t_0}}, \theta) \in \mathbb{K}_2$ it holds
\begingroup\makeatletter\def\f@size{7}\check@mathfonts
\def\maketag@@@#1{\hbox{\m@th\normalsize\normalfont#1}}
\begin{equation}\label{eq_neural_net_approx_strats}
\begin{aligned}
&\bigg\|\mathcal{N}_2(\boldsymbol{K},\boldsymbol{\pi},\B{S_{t_0}},\theta)-\left(a^*,(c^*_{1jk})_{j,k},(p^*_{1jk})_{j,k},({\Delta_0^k}^*)_k\right)\left(\boldsymbol{K}, \B{S_{t_0}}, \theta\right)\bigg\|_{1+4M+d}<\varepsilon.
\end{aligned}
\end{equation}
\endgroup
\end{itemize}
\end{thm}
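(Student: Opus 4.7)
The plan is that parts (b) and (d) are immediate corollaries of parts (a) and (c) via Proposition~\ref{lem_universal}: once the target map is known continuous on a compact set, a neural network approximates it to arbitrary sup-norm accuracy, which is exactly the content of~\eqref{eq_neural_net_approx} and~\eqref{eq_neural_net_approx_strats}. Hence all the substance is in proving (a) and (c), which I would treat as continuity statements for the value and argmin of a parametric constrained optimization problem.

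For (a), I would establish upper and lower semicontinuity of $(\boldsymbol{K},\boldsymbol{\pi},\B{S_{t_0}},\theta)\mapsto \overline{D}^{\mathfrak{B},B}_{(\boldsymbol{K},\boldsymbol{\pi},\B{S_{t_0}})}(\Phi_\theta)$ separately (and symmetrically for $\underline{D}$). Upper semicontinuity is the routine half. Given $x_m := (\boldsymbol{K}_m,\boldsymbol{\pi}_m,\B{S}_{t_0,m},\theta_m)\to x$ in $\mathbb{K}_1$ and an $\varepsilon$-optimal super-replicating strategy for $x$ with parameters $(a,\B{c},\B{p},\Delta)$, I would build a candidate for $x_m$ by replacing the strikes with $\boldsymbol{K}_m$, keeping $\B{c},\B{p},\Delta$, and inflating the cash $a$ by the worst-case super-replication defect. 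That defect vanishes because $(s-K)^+$ and $(K-s)^+$ are $1$-Lipschitz in $K$ uniformly on $[0,B]$, because $\theta\mapsto\Phi_\theta$ is continuous in $d_{\infty,B}$ by~\eqref{eq_defn_theta_map}, and because the $\B{S_{t_0}}$-dependent terms (the $i=0$ delta-profit and transaction cost) are linear in $\B{S_{t_0}}$ with coefficients uniformly bounded by $\mathfrak{B}$. The cost then converges because $\mathcal{C}(\cdot,\boldsymbol{\pi})$ is linear in $\boldsymbol{\pi}$ with coefficients controlled by $\mathfrak{B}$; letting $\varepsilon\to 0$ yields $\limsup_m \overline{D}(x_m)\leq \overline{D}(x)$.

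Lower semicontinuity is the main obstacle. Taking $1/m$-optimal strategies $\Psi_m$ for $x_m$, the bound $\Sigma\leq\mathfrak{B}$ controls all option positions and the $\|\cdot\|_\infty$-norm of each $\Delta_i^k$ uniformly; together with the two-sided bound on $\mathcal{C}(\Psi_m,\boldsymbol{\pi}_m)$ coming from~\eqref{eq_no_arbitrage} (and upper semicontinuity proved above), this also bounds the cash $a_m$. Finite-dimensional parameters converge along a subsequence by Bolzano--Weierstrass, and the deterministic $\Delta_0^k$ does too. For the infinite-dimensional $\Delta_i^k$ with $i\geq 1$, I would exploit that the $\mathfrak{B}$-ball of $L^\infty([0,B]^{id})$ is weak-$*$ sequentially compact (Banach--Alaoglu with separable $L^1$-predual) to extract a limit. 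The technical crux is to pass the \emph{pointwise} super-replication constraint through a weak-$*$ limit; I would handle this by first arguing that the infimum in~\eqref{eq_subhedging_bidask} is unchanged when one restricts $\Delta_i^k$ to be Lipschitz continuous (via a mollification exploiting the compactness of $[0,B]^{id}$ and the Lipschitz nature of the other summands of $\Psi$), so that Arzel\`a--Ascoli provides uniformly convergent subsequences on the compact set $[0,B]^{id}$ and the pointwise inequality passes to the limit. The limit strategy is then feasible for $x$ with cost $\lim\mathcal{C}(\Psi_m,\boldsymbol{\pi}_m)$, yielding $\liminf_m \overline{D}(x_m)\geq \overline{D}(x)$.

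Part (c) is considerably simpler because with $n=1$ all trading parameters are scalars, so the problem reduces to finite-dimensional optimization over $\R^{1+4M+d}$. Given $x_m\to x$ in $\mathbb{K}_2$ with unique optimizers $y_m^* = (a_m^*,(c_{1jk}^*)_{j,k},(p_{1jk}^*)_{j,k},({\Delta_0^k}^*)_k)$, the bound $\Sigma\leq\mathfrak{B}$ together with~\eqref{eq_boundedness_assumption} keeps $(y_m^*)_m$ bounded, so a subsequence converges to some $y^\dagger$. Joint continuity of the cost and of the super-replication constraint in all variables (using~\eqref{eq_defn_theta_map}) shows $y^\dagger$ is feasible for $x$ with cost $\lim \overline{D}(x_{m_j}) = \overline{D}(x)$ (by part (a)), hence optimal. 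Uniqueness at $x$ then forces $y^\dagger = y^*(x)$, and since every subsequential limit agrees with $y^*(x)$ the full sequence converges to $y^*(x)$, giving the desired continuity.
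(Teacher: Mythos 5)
Your treatment of (b), (c) and (d) matches the paper: (b) and (d) are immediate from Proposition~\ref{lem_universal} once (a) and (c) are in place, and your argument for (c) --- boundedness of the optimizers via $\Sigma\le\mathfrak{B}$ and \eqref{eq_boundedness_assumption}, extraction of a subsequential limit, feasibility and optimality of any accumulation point by continuity of the cost, the constraint and of $\overline{D}$ from part (a), and uniqueness forcing convergence of the full sequence --- is exactly the paper's proof. The problem is in (a), where you split the work into upper and lower semicontinuity and locate ``the main obstacle'' in the latter. The lower-semicontinuity route you sketch has genuine gaps: (i) weak-$*$ convergence of the $\Delta_i^k$ in $L^\infty$ cannot transport the \emph{everywhere}-pointwise constraint $\Psi\ge\Phi_\theta$ to the limit, as you yourself note; (ii) the proposed repair --- that the infimum in \eqref{eq_subhedging_bidask} is unchanged when the $\Delta_i^k$ are restricted to be Lipschitz --- is unproven and nontrivial: mollifying a merely measurable bounded $\Delta_i^k$ perturbs the term $\Delta_i^k(\cdot)\bigl(S_{t_{i+1}}^k-S_{t_i}^k\bigr)$ by as much as $2\mathfrak{B}B$ at individual points, so the pointwise super-replication inequality cannot be restored by a small cash buffer; and (iii) even granting Lipschitz strategies, Arzel\`a--Ascoli requires a Lipschitz constant uniform along the sequence, which the sup-norm bound $\Sigma\le\mathfrak{B}$ does not supply.

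The paper needs none of this, because no lower-semicontinuity or compactness-of-strategies argument is required. The transfer estimate you already use for the ``routine half'' is symmetric in the two data points: the $1$-Lipschitz dependence of $(s-K)^+$ and $(K-s)^+$ on $K$, the bound $\|\Phi_\theta-\Phi_{\tilde\theta}\|_{\infty,B}<\tilde\delta$ from \eqref{eq_defn_theta_map}, and the $\delta\mathfrak{B}$-perturbation of the cost are all two-sided. The paper therefore distinguishes the cases $\overline{D}^{\mathfrak{B},B}_{(\B{\widetilde K},\widetilde{\B\pi},\B{\widetilde S_{t_0}})}(\Phi_{\widetilde\theta})\ge\overline{D}^{\mathfrak{B},B}_{(\B{K},\B{\pi},\B{S_{t_0}})}(\Phi_{\theta})$ and the reverse, and in each case transfers an $\varepsilon/2$-optimal strategy \emph{from the point with the smaller value to the other}, which combined with the case hypothesis yields the two-sided bound $|\overline{D}(\tilde x)-\overline{D}(x)|<\varepsilon$ outright. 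So the ingredient you dismiss as routine is in fact the entire proof, while the half you identify as the main difficulty is both gap-ridden as sketched and unnecessary.
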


\begin{proof}
See Section \ref{appendix_proof}.
\end{proof} 

\begin{rem}\label{rem_assumptions_thm}
\begin{itemize}
\item[(a)]
Assumption \eqref{eq_no_arbitrage} means that the market with its parameters $\boldsymbol{K}$, $\boldsymbol{\pi}$, $\B{S_{t_0}}$ is arbitrage-free, compare e.g. \cite[Assumption 2.1. and Theorem 2.4.]{neufeld2020model} for the case $n=1$, \cite[Definition 1.1. and Theorem 5.1.]{bouchard2015arbitrage} for the multi-period case with traded options, and  \cite[Theorem 2.1.]{cheridito2017duality} for the general case with market frictions.  Note that assuming an arbitrage-free market is a necessity to determine arbitrage-free price bounds of financial derivatives. Indeed, if the market offers arbitrage, then we can identify a trading strategy fulfilling
$\Psi^{(\boldsymbol{K},\B{S_{t_0}})}_{(a,\B{c_{ijk}},\B{p_{ijk}},\Delta_i^k)}(\B{s}) \geq 0$ for some parameters $(a,\B{c_{ijk}},\B{p_{ijk}},\Delta_i^k)_{i,j,k}$ with price $\mathcal{C}\left(\Psi^{(\boldsymbol{K},\B{S_{t_0}})}_{(a,\B{c_{ijk}},\B{p_{ijk}},\Delta_i^k)},\boldsymbol{\pi}\right)<0$.
Now, consider a super-replication strategy  $(\widetilde{a},\widetilde{\B{c_{ijk}}},\widetilde{\B{p_{ijk}}},\widetilde{\Delta}_i^k)_{i,j,k}$ of some derivative $\Phi$ satisfying $\Psi^{(\boldsymbol{K},\B{S_{t_0}})}_{(\widetilde{a},\widetilde{\B{c_{ijk}}},\widetilde{\B{p_{ijk}}},\widetilde{\Delta}_i^k)}(\B{s}) \geq \Phi(\B{s})$. Then we have  for all $\lambda > 0 $ that 
\[
\Psi^{(\boldsymbol{K},\B{S_{t_0}})}_{(\widetilde{a},\widetilde{\B{c_{ijk}}},\widetilde{\B{p_{ijk}}},\widetilde{\Delta}_i^k)}(\B{s})+\lambda \cdot \Psi^{(\boldsymbol{K},\B{S_{t_0}})}_{(a,\B{c_{ijk}},\B{p_{ijk}},\Delta_i^k)}(\B{s})=\Psi^{(\boldsymbol{K},\B{S_{t_0}})}_{(\widetilde{a}+\lambda a,\widetilde{\B{c_{ijk}}} +\lambda\B{c_{ijk}} ,\widetilde{\B{p_{ijk}}}+\lambda \B{p_{ijk}},\widetilde{\Delta}_i^k+\lambda {\Delta}_i^k)}(\B{s}) \geq \Phi(\B{s})
\]
meaning that $(\widetilde{a}+\lambda a,\widetilde{\B{c_{ijk}}} +\lambda\B{c_{ijk}} ,\widetilde{\B{p_{ijk}}}+\lambda \B{p_{ijk}},\widetilde{\Delta}_i^k+\lambda {\Delta}_i^k)$ is another super-replication strategy, whose price is given by 
\begin{equation}\label{eq_super_rep_lambda}
\mathcal{C}\left(\Psi^{(\boldsymbol{K},\B{S_{t_0}})}_{(\widetilde{a}+\lambda a,\widetilde{\B{c_{ijk}}} +\lambda\B{c_{ijk}} ,\widetilde{\B{p_{ijk}}}+\lambda \B{p_{ijk}},\widetilde{\Delta}_i^k+\lambda {\Delta}_i^k)},\boldsymbol{\pi}\right) = \mathcal{C}\left(\Psi^{(\boldsymbol{K},\B{S_{t_0}})}_{(\widetilde{a},\widetilde{\B{c_{ijk}}},\widetilde{\B{p_{ijk}}},\widetilde{\Delta}_i^k)},\boldsymbol{\pi}\right)+\lambda \cdot \mathcal{C}\left( \Psi^{(\boldsymbol{K},\B{S_{t_0}})}_{(a,\B{c_{ijk}},\B{p_{ijk}},\Delta_i^k)},\boldsymbol{\pi}\right).
\end{equation}
By scaling up $\lambda >0$, we see from \eqref{eq_super_rep_lambda} that the corresponding price decreases, which in turn implies $\overline{D}^{\mathfrak{B},B}_{(\boldsymbol{K},\boldsymbol{\pi},\B{S_{t_0}})}\left(\Phi\right) \ll 0$ (for $\mathfrak{B}$ large enough). 

With an analogue argument we conclude that if the market offers arbitrage, then we have $\underline{D}^{\mathfrak{B},B}_{(\boldsymbol{K},\boldsymbol{\pi},\B{S_{t_0}})}\left(\Phi\right) \gg 0$, preventing the computation of reasonable price bounds.

% \overline{D}^{\mathfrak{B},B}_{(\boldsymbol{K},\boldsymbol{\pi},\B{S_{t_0}})}\left(\Phi\right):=\inf_{\substack{a\in \R, \\ \B{c_{ijk}} ,
%\B{p_{ijk}}\in \R^2_+,\\ (\Delta_i^k) \in B(\R_+^{id},\R)}}&\bigg\{\mathcal{C}\left(\Psi^{(\boldsymbol{K},\B{S_{t_0}})}_{(a,\B{c_{ijk}},\B{p_{ijk}},\Delta_i^k)},\boldsymbol{\pi}\right)\text{ s.t. }\Psi^{(\boldsymbol{K},\B{S_{t_0}})}_{(a,\B{c_{ijk}},\B{p_{ijk}},\Delta_i^k)}(\B{s}) \geq \Phi(\B{s}) \text{ for all } \B{s} \in [0,B]^{nd},\\
%&\hspace{7cm} \text{ and } \Sigma(\B{c_{ijk}},\B{p_{ijk}},\Delta_i^k) \leq \mathfrak{B}\bigg\}
\black 
\item[(b)] In an arbitrage-free market, a necessary requirement for the existence of a unique optimizer, as assumed in Theorem~\ref{thm_convergence}~(c)\black, is that the considered market instruments are non-redundant, i.e., that the payoffs of the market instruments are linear independent. 
In our case, this means that to avoid ambiguity of  minimal super-replication strategies, one should only consider put options that are written on other strikes than the ones for the call options under consideration.
\item[(c)] As a canonical example for a parametric family of payoff functions, we consider for example basket call options with payoffs
\begin{align*}
&\bigg\{\Phi_{\theta}=\max\big\{\sum_{i=1}^n\sum_{k=1}^d w_i^k S_{t_i}^k-L,0\big\}\text{ where }\theta \in \Theta:=\{((w_i^k)_{i,k},L)\} = \R^{nd}\times \R_+\bigg\},
\end{align*}
i.e., the strike $L$ and the weights $(w_i^k)_{i,k}$ are inputs to the trained neural network. 
For any $0<B<\infty$, we have the continuity of the map $$(\Theta,d_{nd+1}) \rightarrow \left(  C(\R^{nd}_+,\R),d_{\infty,B}\right),~\theta \mapsto \Phi_\theta.$$
Thus, we can find a neural network which fulfils \eqref{eq_neural_net_approx} with respect to $\underline{D}^{\mathfrak{B},B}$ and $\overline{D}^{\mathfrak{B},B}$. We remark that assuming a uniform large bound $B$ on the possible values of $S_{t_i}^k$ imposes no severe constraint for practical applications with real market data and allows to reduce the difference between the no-arbitrage price bounds by not considering unbounded prices which are unrealistic in practice. For further examples of parametric families of payoff functions, e.g., best-of-call options or call-on-max options, we refer to \cite[Example 3.2. (i)--(vi)]{neufeld2020model}.

\item[(d)]Theorem~\ref{thm_convergence} is also applicable to a single pre-specified continuous payoff function $\Phi$ when setting $\Phi_\theta = \Phi$ for all $\theta \in \Theta$.
\item[(e)]
Note that we restrict the assertion of Theorem~\ref{thm_convergence}~(d) to $n=1$ to make sure that $\Delta_i^k$, which is the output of the neural network, is a number, not a function.
\item[(f)] An analogue result as in Theorem~\ref{thm_convergence}~(c) and Theorem~\ref{thm_convergence}~(d) for optimal sub-hedging strategies can be obtained in the same way.
{ 
\item[(g)] We highlight that our approach computes \emph{model-independent} price bounds, i.e., no assumptions on underlying financial models are imposed and market prices of call and put options are considered as exogenous inputs. Nevertheless, in a frictionless market, it is possible for each arbitrage-free sample $\left(\boldsymbol{K},  \B{\pi},\B{S_{t_0}}, \theta\right)$ to find a stochastic model (in which call option prices are given endogenously) that is consistent with this data, compare e.g. \cite{davis2007range}. Therefore one could understand the model-independent price bounds obtained with respect to the given sample $\left(\boldsymbol{K},  \B{\pi},\B{S_{t_0}}, \theta\right)$ also as the prices of hedging strategies in such a consistent model. However, note that such a model (expressed by a probability measure) would be different for each considered sample.}
\end{itemize}
\end{rem}

Finally, Algorithm~\ref{algo_training_nn} describes,  relying on the results from Theorem~\ref{thm_convergence}, how one can train a neural network which approximates these price bounds.

\begin{algorithm}[h!]\label{algo_training_nn}
\SetAlgoLined\small{
\SetKwInOut{Input}{Input}
\SetKwInOut{Output}{Output}
\KwData{Call and put option prices (bid and ask) on different securities and maturities; Associated strikes, maturities, and spot prices;}
\Input{Algorithm to compute price bounds of exotic derivatives; Family $\{\Phi_\theta, \theta \in \Theta\}$ of payoff functions $\Phi_{\theta}:\R_+^{nd} \rightarrow \R$ fulfilling the requirements of Theorem~\ref{thm_convergence}; Hyper-parameters of the neural network; Number $n_{\operatorname{subset}}$ of considered functions from $\{\Phi_\theta, \theta \in \Theta\}$ for each sample; Transaction costs $\kappa \geq 0$; Bounds $\mathfrak{B}$ and $B$;}

%\For{$\tilde{t}$ in $\{$Observation Dates$\}$} {We consider call option prices written on $\tilde{d} \geq d$ underlying securities with $\tilde{n} \geq n$ different maturities. \\
\For{each sample $ \left(\boldsymbol{K}_i,\boldsymbol{\pi}_i,{\B{S_{t_0}}_i}\right)$ of data considering exactly $n$ maturities and $d$ securities}{
 $\B{\widetilde{X_i}} \gets \left(\boldsymbol{K}_i,\boldsymbol{\pi}_i,{\B{S_{t_0}}_i}\right)$;}
$\mathcal{S} \gets \#\{\B{\widetilde{X_i}}\}$; \tcp{Assign number of samples and call it $\mathcal{S}$.}
\For{$i$ in $\{1,\dots,\mathcal{S}\}$}
{
Generate a (random) subset ${\{\Phi_{\theta_j}, j=1,\dots,n_{\operatorname{subset}}}\} \subset \{\Phi_\theta, \theta \in \Theta\}$;\\
\For{$j$ in $\{1,\dots,n_{\operatorname{subset}}\}$}{
Compute for $\B{\widetilde{X}}_i$ the corresponding price bounds {\tiny$\left( \underline{D}^{\mathfrak{B},B}_{(\boldsymbol{K}_i,\boldsymbol{\pi}_i,{\B{S_{t_0}}}_i)}\left(\Phi_{\theta_j}\right),  \overline{D}^{\mathfrak{B},B}_{(\boldsymbol{K}_i,\boldsymbol{\pi}_i,{\B{S_{t_0}}}_i)}\left(\Phi_{\theta_j}\right)\right)$};\\
$\boldsymbol{X_{(i-1)n_{\operatorname{subset}}+j}} \gets (\B{\widetilde{X_i}},\theta_j)$;\\
$\boldsymbol{Y_{(i-1)n_{\operatorname{subset}}+j}} \gets \left( \underline{D}^{\mathfrak{B},B}_{(\boldsymbol{K}_i,\boldsymbol{\pi}_i,{\B{S_{t_0}}}_i)}\left(\Phi_{\theta_j}\right),  \overline{D}^{\mathfrak{B},B}_{(\boldsymbol{K}_i,\boldsymbol{\pi}_i,{\B{S_{t_0}}}_i)}\left(\Phi_{\theta_j}\right)\right)$;
\\
\tcp{In this step it would be possible to use any algorithm that can compute these bounds reliably, see Remark~\ref{rem_mot_computation}}
%\item We normalize the data in $X_i$.
%\item We divide the data into training, validation and test set.
}
}
Train with back-propagation (\cite{rumelhart1986learning}) a neural network $\mathcal{N}\in \mathfrak{N}_{d_{\operatorname{in}},2}$ with a sufficient number of neurons and hidden layers such that $\mathcal{N}(\B{X_i}) \approx \B{Y_i}$;
\caption{Training of a neural network via back-propagation for the computation of the price bounds $\underline{D}^{\mathfrak{B},B}_{(\boldsymbol{K},\boldsymbol{\pi},\B{S_{t_0}})}\left(\Phi_\theta\right),\overline{D}^{\mathfrak{B},B}_{(\boldsymbol{K},\boldsymbol{\pi},\B{S_{t_0}})}\left(\Phi_\theta\right)$ of a class of financial derivatives $\{\Phi_\theta\}_{\theta \in \Theta}$.}
\Output{Trained neural network $\mathcal{N} \in \mathfrak{N}_{N_{\operatorname{input}},2}$ with $N_{\operatorname{input}}$ as in Theorem~\ref{thm_convergence};}
}
\end{algorithm}

\begin{rem}\label{rem_mot_computation}
To compute model-independent price bounds given option prices, we can use e.g. a linear programming approach based on grid discretization as proposed in \cite{eckstein2019robust}, \cite{guo2019computational}, or \cite{henry2013automated}. If the payoff function only depends on one future maturity and is continuous piecewise affine (CPWA, see e.g. \cite[Example 3.2.]{neufeld2020model}) we can also use the numerically very efficient algorithm proposed in \cite{neufeld2020model}. If the payoff function fulfils a so-called martingale Spence--Mirrleess condition\footnote{This means that $\frac{\partial^3}{\partial xy^2}\Phi(x,y)$ exists and satisfies $\frac{\partial^3}{\partial xy^2}\Phi(x,y) >0$  for all $x,y$.} one can apply the algorithm presented in \cite{henry2019martingale}. For multiple time-steps, another possibility is to apply the penalization approach presented in \cite{eckstein2019computation}.
 The minimization is then performed using a stochastic gradient descent algorithm with some penalization parameter $\gamma$ which enforces the optimizing strategy to be a super-hedge.
\end{rem}

\subsection{Examples}
\label{sec_market_data}
In this section we present, in selected examples, the results of our approach when applied to real market data.
\subsubsection{Training data}\label{sec_trainingdata}
We consider for the training of all neural networks financial market data received from \emph{Thomson Reuters Eikon} that was observed on $10$th June $2020$. The data includes bid and ask prices on call options written on all $500$ constituents of the American stock market index S$\&$P $500$.
Note that, in Example~\ref{exa_real_example_strategies}, we also predict the optimal super-hedging strategy and not only optimal price bounds. Thus, to avoid ambiguity of the optimal strategy, as explained in Remark~\ref{rem_assumptions_thm}, we do not consider any put options there.
We consider for each constituent and each available maturity of an option the $20$ most liquid strikes, i.e., the bid and ask prices of options with the highest trading volume.
\subsubsection{Test data}\label{sec_testdata}
For testing the trained neural networks we consider - as for the training data - option prices on all constituents of the S$\&$P $500$. The data was observed on $23$rd August $2020$. We highlight that, in particular, the test data comes from a different dataset than the training data.
\subsubsection{Implementation} \label{sec_implementation}
The training of each of the neural networks is performed using the back-propagation algorithm (\cite{rumelhart1986learning}) with an Adam optimizer (\cite{kingma2014adam}) implemented in \emph{Python} using \emph{Tensorflow} (\cite{abadi2016tensorflow}). For the optimization with the Adam optimizer we use a batch size of $256$. The architecture involves a $L^2$-loss function, the neural networks comprise $3$ hidden layers with $512$ neurons each and \emph{ReLU} activation functions. The samples are normalized before training with a min-max scaler. Moreover, we assign $10 \%$ of the training data to a validation set to be able to apply early stopping (compare \cite[Chapter 7.8.]{goodfellow2016deep}) to prevent overfitting to the training data. To reduce the internal covariate shift of the neural network and to additionally regularize it, we apply batch normalization (\cite{ioffe2015batch}) after each layer.
%We state explicitly the parts of the examples where we differ from this architecture.\\
All the codes related to the examples below\footnote{For copyright reasons we can only provide the used code, but we cannot provide the used data.}, as well as the trained neural networks are provided under \href{https://github.com/juliansester/deep\_model\_free\_pricing}{https://github.com/juliansester/deep\_model\_free\_pricing}. For all examples we assume no transaction costs, i.e., we have $\kappa = 0$.

\begin{exa}[Training of the valuation of call options given prices of other call options]\label{exa_call_options}

We want to train the valuation of  call options for arbitrary strikes, i.e., we consider payoff functions from the set
\begin{align*}
\{\Phi_\theta, \theta \in \Theta\}=\bigg\{\Phi_L(S_{t_1}^1):= &\max\left\{S_{t_1}^1-L,0 \right\},\text{ with } L \in \R_+\bigg\}.
\end{align*}
Note that the assumptions of Theorem~\ref{thm_convergence} are met for any $B \in (0,\infty], \mathfrak{B} <\infty$, which we choose therefore large enough to not impose a restriction. Thus Theorem~\ref{thm_convergence}~(b) ensures that we can train a single neural network for the above mentioned parameterized family of payoff functions.

In this example, a single sample $\B{X_i}$ consists of $62$ total entries which comprise $20$ bid prices, $20$ ask prices, $20$ associated strikes of call options, as well as the underlying spot price and the strike $L$ of the call option $\Phi_L$ which we want to price.

We apply Algorithm~\ref{algo_training_nn} to train a neural network, in particular, for each of the prices from the training data, we create several different random strikes $L$, for which we compute a corresponding $\B{Y_i}$ which consists of lower and upper model-independent price bounds for all samples according to the algorithm from \cite{neufeld2020model}. With this methodology we create a training set with $100 000$ samples.
 We then train, as described in Section~\ref{sec_implementation}, a neural network using back-propagation and test it on the test data, described in Section~\ref{sec_testdata}, which was observed at a later date (August $2020$). The test set consists of $10000$ samples.

The results of the training yield a mean absolute error of $2.2033$  as well as a mean squared error of $25.8779$ \black on the test set and are depicted in Figure~\ref{fig_accuracy_call}.
To be able to compare the error independent of the size of the spot price of the underlying security, we report a mean absolute error of $0.0111$ when dividing the predicted prices by the spot prices. We call this value the \emph{relative mean absolute error}.  The corresponding squared distance after division by the spot prices amounts to $0.0003$ and is called \emph{relative mean squared error}. \black 
Compare also  Figure~\ref{fig_accuracy_call_relative}, where we depict the \emph{relative error} of each sample in the test set, i.e., the difference of each prediction from its target value, after division with the spot price.

\begin{figure}[h!]
\begin{center}

\subfloat[\label{fig_accuracy_call}]{
\includegraphics[scale=0.4]{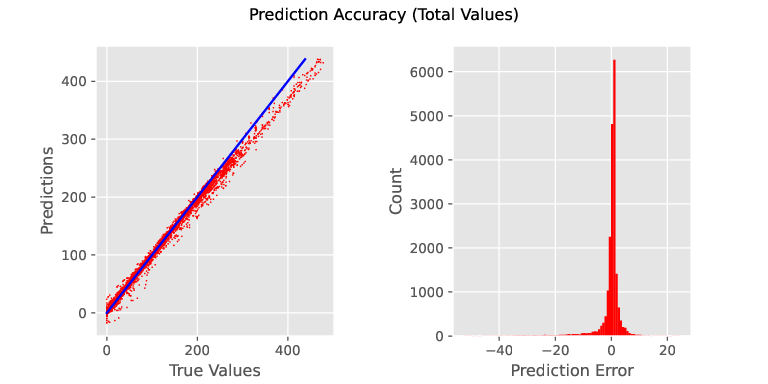}
}
\subfloat[\label{fig_accuracy_call_relative}]{\includegraphics[scale=0.4]{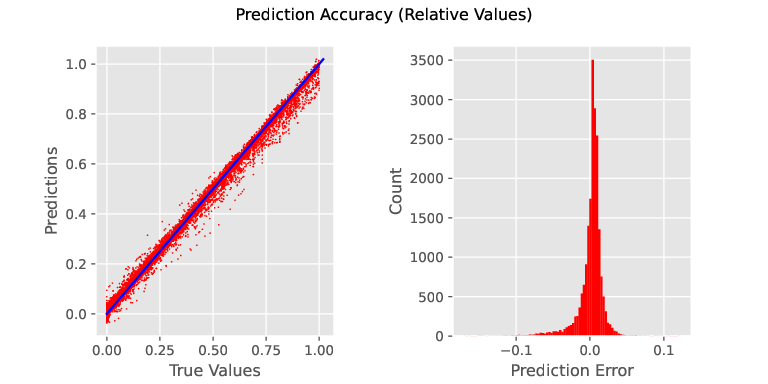}}

\caption{(a): This figure illustrates 
%the progress of learning the model-free price bounds of call options as well as 
the accuracy of the predictions on the test set. The left panel shows a plot of all target values (x-values) and its predictions (y-values), the right panel depicts a histogram of the prediction error, i.e., the error between target values and predicted values.\\
(b):This figure shows the accuracy of the predictions of call option prices on the test set when considering the {relative error}, i.e., when dividing the predicted prices $\B{Y_i}$ by the corresponding spot prices.}
\end{center}
\end{figure}
\end{exa}

\begin{exa}[Optimal strategies of basket options]\label{exa_real_example_strategies}
We consider payoff functions of basket options written on two assets, i.e., the class of payoff functions is defined through
\begin{align*}
\{\Phi_\theta, \theta \in \Theta\}=\bigg\{&\Phi_{w_1,w_2,L}(S_{t_1}^1,S_{t_1}^2)= \max\left\{w_1 S_{t_1}^1+ w_2 S_{t_1}^2-L,0\right\},\text{ with } w_1,w_2,L \in \R_+\bigg\}.
\end{align*}
When $B, \mathfrak{B} <\infty$, the assumptions of Theorem~\ref{thm_convergence}~(b), respectively those of Remark~\ref{rem_assumptions_thm}~(c), are fulfilled. 
For each of the considered market prices from the training set and test set, respectively, we create in accordance with Algorithm~\ref{algo_training_nn} several different weights $w_1,w_2$ and some strike $L$.
Thus, a sample $\B{X_i}$ consists of the spot prices $S_{t_0}^1,S_{t_0}^2$, the generated values $w_1,w_2,L$ as well as of bid and ask prices with associated strikes of both assets, i.e., in total each $\B{X_i}$ consists of $125$ numbers.

We aim at { simultaneously }predicting the minimal super-replication strategy { and} the maximal sub-replication strategy. Therefore, we compute the parameters of the strategies attaining the price bounds according to the algorithm from \cite{neufeld2020model}. Thus, in our case each sample $\B{Y_i}$ comprises $86$ values, which constitute, for both lower and upper bound, of the initial investment $a$ ($1$ parameter), the buy and sell positions in call options\footnote{{ Here, by slight abuse of notation, we denote by $c_{ljk} \in \R$ the net position invested in the option, i.e., $c_{ljk}$ is also allowed to attain negative values.}} $(c_{1jk})_{j,k}{{j=1,\dots,20}, \atop{k=1,2}}$ ($40$ parameters) and the investment positions in the underlying securities $(\Delta_0^k)_{k=1,2}$ ($2$ parameters).

 Moreover, \black training a neural network to learn the relationship between market parameters and optimal super-replication strategy  (without the price bound) \black  is possible due to Theorem~\ref{thm_convergence}~(d), which implies that the difference in absolute values between predicted parameters $\left(a,(\B{c_{1jk}})_{j,k},(\Delta_0^k)_k\right)$ and true parameters $\left(a^*,(\B{c^*_{1jk}})_{j,k},({\Delta_0^k}^*)_k\right)$ of the optimal strategy should not differ significantly after training. 
We train the neural network on $150000$ samples, test it on $10000$ samples, and obtain indeed a small relative mean absolute error of $0.0015$. { We highlight that the training set remains the same, in particular does not consist of trading strategies.}

After having trained the neural network to predict the minimal super-hedging strategies, we are able to derive from these strategies the optimal price  bounds \black using \eqref{eq_definition_cost_functional} and compare it with the predictions from a neural network which is trained on predicting lower and upper price bounds directly instead of predicting optimal strategies. In Figure~\ref{fig_prediction_prices_vs_strategies} we show that  however\black , as expected, the neural network that predicts prices directly performs by far better than the price bound predictions that are derived  via \eqref{eq_definition_cost_functional} from the trained \black  strategies, when evaluated on the test set. The relative\footnote{ Note that here the relative error refers to the error after division with the \emph{weighted} sum of the spot prices, where the weights are determined by the weights in the payoff of the basket option. \black } mean absolute error of the direct prediction of the price bounds is $0.0319$, whereas the relative mean absolute error of the prediction relying on the strategies is $0.1804$.  The corresponding  relative mean squared errors are $0.0148$ and $0.5045$, respectively. \black

\begin{figure}[h!]
\begin{center}
\includegraphics[scale=0.4]{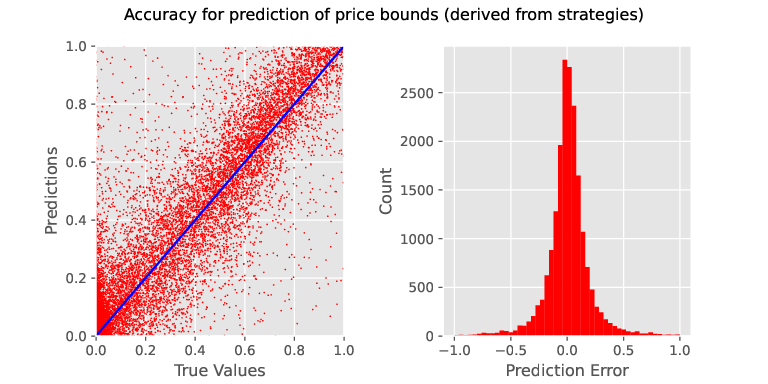}
\includegraphics[scale=0.4]{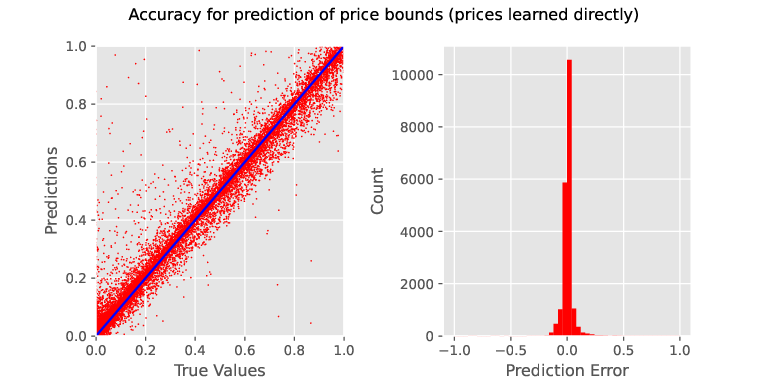}
\caption{This figure compares the accuracy of prediction of price bounds derived from the trained strategies using \eqref{eq_definition_cost_functional} (left) with those predictions from neural networks that are trained to predict the prices directly (right). We depict the relative error of the predictions by dividing the prediction error through the weighted sum of spot prices, where the weights are according to the weights in the payoff of the basket option.}\label{fig_prediction_prices_vs_strategies}
\end{center}
\end{figure}

The larger approximation error when approximating first the strategies by a neural network and then deriving price bounds from this approximation can be explained as follows.

When approximating the price bounds directly, then we have, after sufficient training of a neural network, according to Theorem~\ref{thm_convergence}{~(b)}, a maximal absolute approximation error of order { $\varepsilon_1$} between the upper price bound and the output of the neural network, given a tolerance level of { $\varepsilon_1>0$}.

In contrast, when approximating the optimal super-replication strategy $\bigg(a^*,(c^*_{1jk})_{j,k},(p^*_{1jk})_{j,k},({\Delta_0^k}^*)_k\bigg)\left(\boldsymbol{K}, \B{\pi}, \B{S_{t_0}}, \theta\right)$ by the output of a neural network, denoted by $\bigg({a^{\mathcal{N}\mathcal{N}}},(c^{\mathcal{N}\mathcal{N}}_{1jk})_{j,k},(p^{\mathcal{N}\mathcal{N}}_{1jk})_{j,k},({\Delta_0^k}^{\mathcal{N}\mathcal{N}})_k\bigg)\left(\boldsymbol{K}, \B{\pi}, \B{S_{t_0}}, \theta\right)$ { at some tolerance level $\varepsilon_2>0$, } then the absolute error between the upper price bound $\overline{D}^{\mathfrak{B},B}_{(\boldsymbol{K},\boldsymbol{\pi},\B{S_{t_0}})}\left(\Phi_{\theta}\right) =\mathcal{C}\left(\Psi^{(\boldsymbol{K},\B{S_{t_0}})}_{(a^*,\B{c_{1jk}}^*,\B{p_{1jk}}^*,{\Delta_0^k}^*)},\boldsymbol{\pi}\right)$ and the price $\mathcal{C}\left(\Psi^{(\boldsymbol{K},\B{S_{t_0}})}_{(a^{\mathcal{N}\mathcal{N}},\B{c_{1jk}}^{\mathcal{N}\mathcal{N}},\B{p_{1jk}}^{\mathcal{N}\mathcal{N}},{\Delta_0^k}^{\mathcal{N}\mathcal{N}})},\boldsymbol{\pi}\right)$ derived from the approximated strategy computes by \eqref{eq_definition_cost_functional} as  
\begin{align*}
\Bigg|a^*-a^{\mathcal{N}\mathcal{N}}&+\sum_{k=1}^d\sum_{j=1}^{n_{1k}^{\operatorname{opt}}} \left(({c_{1jk}^+}^*-{c_{1jk}^+}^{\mathcal{N}\mathcal{N}})\pi_{\operatorname{call},1,j,k}^+-({c_{1jk}^-}^*-{c_{1jk}^-}^{\mathcal{N}\mathcal{N}})\pi_{\operatorname{call},1,j,k}^-\right) \\
&+\sum_{k=1}^d\sum_{j=1}^{n_{1k}^{\operatorname{opt}}} \left(({p_{1jk}^+}^*-{p_{1jk}^+}^{\mathcal{N}\mathcal{N}})\pi_{\operatorname{put},1,j,k}^+-({p_{1jk}^-}^*-{p_{1jk}^-}^{\mathcal{N}\mathcal{N}})\pi_{\operatorname{put},1,j,k}^-\right) \Bigg|
\end{align*}
which is, according to \eqref{eq_neural_net_approx_strats}, as large as of order
\[
{ \varepsilon_2} \cdot  \max \left\{ \max_{j,k} \pi_{\operatorname{call},1,j,k}^+,~ \max_{j,k} \pi_{\operatorname{call},1,j,k}^-,~\max_{j,k} \pi_{\operatorname{put},1,j,k}^+,~\max_{j,k} \pi_{\operatorname{put},1,j,k}^-\right\}.
\]
{ However, $C:= \max \left\{ \max_{j,k} \pi_{\operatorname{call},1,j,k}^+,~ \max_{j,k} \pi_{\operatorname{call},1,j,k}^-,~\max_{j,k} \pi_{\operatorname{put},1,j,k}^+,~\max_{j,k} \pi_{\operatorname{put},1,j,k}^-\right\}$ is typically significantly larger than $1$} (the largest call option price in the considered test set was $349 \$ $).
{ Therefore, to obtain similar approximation results for the prices derived from strategies as those when approximating prices directly, one needs to consider a tolerance level of $\varepsilon_2 = \frac{\varepsilon_1}{C}$ which usually is significantly smaller than $\varepsilon_1$. In our case, the training set was not large enough to obtain such a high precision in the approximation of the strategies\footnote{{One certainly could enlarge the training set to improve the approximation results. However, the goal of this example is to showcase that on a \textbf{fixed} training set predicting price bounds directly is more efficient than predicting price bounds via \eqref{eq_definition_cost_functional} after having approximated the optimal strategies.}}.

The above outlined reasoning is supported by a comparatively large approximation error which is already observed on the training set. When predicting prices from strategies via \eqref{eq_definition_cost_functional}, we have a relative mean absolute error of $0.1311$ on the training set, which is only marginally smaller than the relative mean absolute error of $0.1804$ which we reported for predictions on the test set. In comparison, the relative mean absolute error for directly predicting prices is $0.0160$ on the training set, whereas we reported a relative mean absolute error of $0.0319$ on the test set.}

{ We conclude that}, even though it is theoretically possible to derive price bounds with an arbitrarily high precision from approximated strategies, in practice it turns out to be more efficient {(and requires a smaller training set)} to train a neural network that approximates the price bounds directly if one is only interested in the prediction of these.

\end{exa}
 
\begin{exa}[Neural networks trained for basket options applied to call options] \label{exa_basket_to_call}

We reconsider the trained neural network from Example~\ref{exa_real_example_strategies} predicting the price bounds of basket options written on two assets.
\begin{itemize}
\item We now test this neural network on the same test set as in  Example~\ref{exa_call_options} which takes into account call options instead of basket options.  To be able to apply the neural network that takes inputs with $125$ entries, we modify the original samples $\bold{X}_i$ from Example~\ref{exa_call_options} (originally containing $62$ entries) by duplicating the original entries (except for the strike of the call option) and by additionally adding weights of $0.5$ and $0.5$ as well as the original strike, leading to $2 \cdot 61 + 2+ 1 = 125$ entries. This means that we consider a basket option with payoff of the form $\max\{0.5S_{t_1}^1+0.5S_{t_1}^1-K,0\} = \max\{S_{t_1}^1-K,0\} $ which is a call option.
\item  We then observe a relative mean absolute error of $0.0230$ and a relative mean squared error of $0.0023$ (in comparison with $0.0111$ and $0.0003$,  respectively for the neural network from Example~\ref{exa_call_options} that was only trained on call options). 
\item 
This shows that the more general neural network performs reasonably well also on the more specific payoffs, but is less { specifically} trained and therefore is outperformed by the neural network solely trained on call options. 
\item 
To improve the performance of this neural network, we retrain the neural network by adding in addition to the $150000$ samples containing basket options from Example~\ref{exa_real_example_strategies} also the $100000$ samples containing call options from Example~\ref{exa_call_options}.
\item 

Indeed, the result is a trained neural network that performs well on both basket options and call options. On the test set for call options we obtain a relative mean absolute error of $0.0112$ and a relative mean squared error of $0.0004$ (compared to $0.0111$ and $0.0003$ obtained in Example~\ref{exa_call_options}). On the test set for basket options  we compute a relative mean absolute error of $0.0292$ and a relative mean squared error of $0.0374$ (compared to $0.0319$ and $0.0148$ obtained in Example~\ref{exa_real_example_strategies}). See also Figure~\ref{fig_basket_to_call}, where we depict the accuracy of the predictions before and after adding the additional samples. { This shows that training the neural network on an additional task (call option price bounds) did improve notably the approximation quality on this specific task while the approximation quality on the well-known task (basked option price bounds) is barely affected.}
\end{itemize}
{ This example showcases that it is possible to train a single neural network that approximates price bounds reasonably well for different types of payoff functions given that these payoff functions are represented appropriately in the training set. For an overview of different types of payoff functions that can be covered by a single neural network we refer to Table~\ref{tbl_payoff_functions}. Moreover, the approach pursued in Example~\ref{exa_basket_to_call} can be considered as an instance of \emph{transfer learning}, since we used an existing neural network as a starting point to improve the approximation quality on a more specific class of payoff functions.}
 \begin{figure}[h!]
\begin{center}
\includegraphics[scale=0.4]{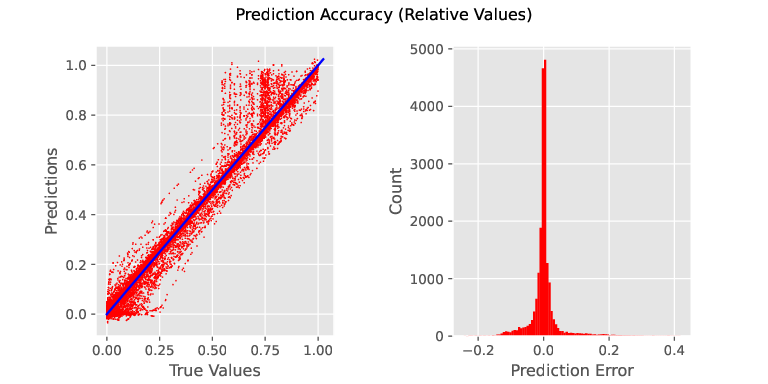}
\includegraphics[scale=0.4]{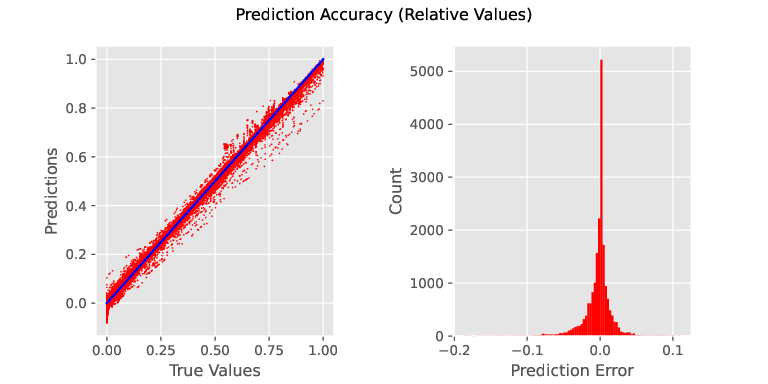}
\caption{ This figure compares the accuracy of prediction of price bounds of call options using the neural network trained solely on basket options (left) with those predictions from a neural network that is trained additionally also on call options (right). We depict the relative error of the predictions by dividing the prediction error { by} the spot prices. \black }\label{fig_basket_to_call}
\end{center}
\end{figure}
\end{exa}
\black

\begin{exa}[High-dimensional payoff function]\label{exa_basket}
We train a neural network to predict prices of a basket option depending on $30$ underlying securities, i.e., we consider a family of payoff functions of the form
\begin{align*}
\{\Phi_\theta, \theta \in \Theta\}=\bigg\{&\Phi_{w_1,\cdots, w_{30},L}(S_{t_1}^1,\cdots, S_{t_1}^{30})= \max\big\{\textstyle{\sum_{k=1}^{30}}w_kS_{t_1}^k-L,0\big\},\text{ with } w_1,\dots,w_{30},L\in \R_+\bigg\}.
\end{align*}
We train the neural network only on $8000$ different samples, where each sample consists of  bid and ask prices of call options for $20$ different strikes of $30$ different underlying securities from data observed in June $2020$ with randomly generated weights $w_k$, $k=1,\dots,30$, and randomly generated strikes $L$, i.e., one single sample $\B{X_i}$ consists of $1861$ entries. These entries consist of $30 \cdot 20$ strikes of call options, $30 \cdot 20$ bid prices of call options, $30 \cdot 20$ ask prices of call options, as well as $30$ spot prices and $1$ strike $L$ of the basket option and $30$ associated weights $w_k$, $k=1,\dots,30$.

The corresponding prices are computed with the algorithm from \cite{neufeld2020model} which enables to compute precise price bounds even in this high-dimensional setting\footnote{With this algorithm it is even possible to compute price bounds of basket options that depend on $60$ securities.}.

After having trained the neural network, we test on $2000$ samples from data on options on the $S\& P$ $500$ that were observed in August $2020$, as described in Section~\ref{sec_testdata}. 
%As each sample consists of $1861$ real valued entries, we need to consider larger neural networks in comparison with the neural networks from Example~\ref{exa_call_options} and Example~\ref{exa_real_example_strategies}. Thus,  we increase the number of neurons in each hidden layer to $2048$.
We test only for the lower bound of the basket option and achieve a relative\footnote{ Note that here, as in Example~\ref{exa_real_example_strategies}, the relative error refers to the error after division with the \emph{weighted} sum of the spot prices, where the weights are determined by the weights in the payoff of the basket option. \black } mean absolute error of ${0.0101}$  and a relative mean squared error of $0.0002$ \black on the test set. Compare Figure~\ref{fig_accuracy_many_options_relative}, where we depict the relative error, i.e., we divide predictions and prices by the weighted sum of the spot prices.

\begin{figure}[h!]
\begin{center}
\includegraphics[scale=0.45]{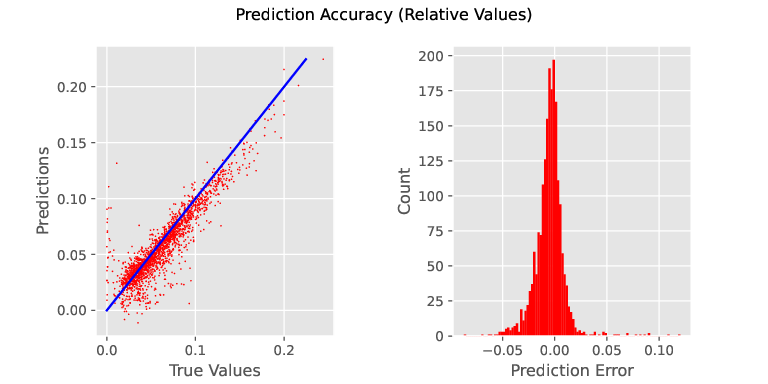}
\caption{This figure shows the relative error when predicting the lower price bound of a basket option that depends on $30$ underlying securities. We divide the target prices (and predicted prices) by the weighted spot prices, where the weights are the ones in the payoff function under consideration.}\label{fig_accuracy_many_options_relative}
\end{center}
\end{figure}
\end{exa}

\begin{rem}\label{rem_approach_in_practice}
\begin{itemize}
\item[(a)]
It turns out that indeed our proposed approach can be executed significantly faster than comparable methods that can be applied to compute model-free price bounds. The computation of $100$ price bounds in the setting of Example~\ref{exa_basket} takes $225.31$ seconds on a standard computer\footnote{\label{footnote_computational_time}  We used for the computations a \emph{Gen Intel(R) Core(TM) i7-1165G7, 2.80 GHz} processor with $40$ GB RAM. \black } when using the LSIP approach\footnote{ The \emph{Matlab}-code for the execution of the LSIP approach is provided under  \href{https://github.com/qikunxiang/ModelFreePriceBounds}{https://github.com/qikunxiang/ModelFreePriceBounds}. \black } from \cite{neufeld2020model}. The execution of a trained neural network to predict $100$ price bounds however only takes $ 0.00303$ seconds.
The execution of the neural network is therefore approximately $75000$ times faster.  This highlights the computational advantage of our proposed approach over comparable numerical methods, and further indicates that our approach indeed allows almost in \emph{real time} the model-free valuation of financial derivatives. { 
\item[(b)] We report quite fast training times that are required to train the neural networks. The computationally most expensive setting considered in Example~\ref{exa_basket} involves, after applying early stopping, the training of $284$ Epochs on $7200$ samples with a batch size of $256$. This training procedure takes in total only $142.88$ seconds\footnote{See footnote~\ref{footnote_computational_time},}.  }
\item[(c)] Note that even though the underlying
 approach is presented in a great generality, it can be easily modified to meet the potentially more specific requirements of applicants. 

If less call or put options are traded than the trained neural network contains, then one can simply set several strikes and prices to the same value and then apply the trained neural network on the smaller financial market.
Moreover, if the payoff function considered in the trained neural network depends on more assets than the same-type payoff function which one wants to price, then this is possible within our approach by adjusting the parameters from the more general payoff function. For example it is possible to determine the price bounds of call options after having trained price bounds of basket options as it was shown in Example~\ref{exa_basket_to_call}, compare also the overview provided in Table~\ref{tbl_payoff_functions} which clarifies which payoffs are of the same type.
Moreover, if one wants to take into account asset-specific investment constraints, both universal bounds $B$ and $\mathfrak{B}$ can be replaced by asset-specific bounds  $(B_i)_{i=1,\dots,n} $ and  option specific bounds $(\mathfrak{B}_{i,j,k})_{i=1,\dots,n, k=1,\dots,d, j=1,\dots,n_{ik}^{\operatorname{opt}}}$. The assertion of Theorem~\ref{thm_convergence} remains valid as the continuity mainly relies on a compactness argument which still can be applied.
\item[(d)] One major implicit assumptions of the presented approach is that the considered call and put options can be traded liquidly. Even though this assumption is usually fulfilled in practice one should verify this assumption carefully. Moreover, given that other traded options are considered sufficiently liquid, the presented approach can be extended in a straightforward way by including these options in \eqref{eq_phi} and \eqref{eq_definition_cost_functional}.
\item[(e)] If one is not interested in imposing a proper trading restriction through the bound $\mathfrak{B}$, then setting $\mathfrak{B}$ to a sufficiently large value does lead in practice to the same price bounds as in an unbounded setting. Therefore, also the optimal parameters of the neural networks that approximate these bounds are the same as in an unbounded setting. This holds true since both the trained parameters of the neural network and the corresponding trained trading strategy a posteriori turn out to remain bounded over the whole training period, compare e.g. \cite[Fig. 4]{baes2019low}.
\item[(f)] It is noteworthy that the presented approach allows to determine model-free price bounds of the most common types of traded financial derivatives by only training a couple of neural networks (one for each type of payoff function), compare the non-exhaustive Table~\ref{tbl_payoff_functions} for an overview which relies partly on the presentation provided in \cite{neufeld2020model}.  Table~\ref{tbl_payoff_functions} shows in particular which payoff functions are of the same type and can therefore be trained with a single neural network. { Note that with Example~\ref{exa_basket_to_call} we provide empirical evidence that this approach not only works in theory but indeed in practice. Recall that in Example~\ref{exa_basket_to_call}} we trained a neural network to predict price bounds of call options and basket options that are both of the same type.
\end{itemize}
\end{rem}
\begin{table}[h!]
 
\begin{center}{
\resizebox{\textwidth}{!}{
\begin{tabular}{llll} \toprule
     Type    & Name     & Payoff & Parameters $\Theta$ \\ \midrule
     I.  &Basket call option with weights $(w^k)_{k=1,\dots,d}$  and strike $L$ &$\max\{\sum_{k=1}^d w^k S_{t_1}^k-L,0\}$ & $\Theta = \{w^1,\dots,w^d \in \R,~ L \in \R\}$ \\   
      I.  &Basket put option with weights $(w^k)_{k=1,\dots,d}$  and strike $L$ &$\max\{L-\sum_{k=1}^d w^k S_{t_1}^k,0\}$ & $\Theta = \{w^1,\dots,w^d \in \R,~ L \in \R\}$ \\        
I.  &Call option on the $i$-th asset with strike $L$ &$\max\{S_{t_1}^i-L,0\}$ & $\Theta = \{L \in \R\}$ \\
I.        &Put option on the $i$-th asset with strike  $L$ &$\max\{L-S_{t_1}^i,0\}$ & $\Theta = \{L \in \R\}$  \\
I.          &Spread call options with weights $w^i,w^j$  and strike $L$& $\max\{w^iS_{t_1}^i-w^jS_{t_1}^j-L,0\}$ &$\Theta = \{ w^i, w^j \in \R, L \in \R \}$ \\         
I.        &Spread put options with weights $w_i,w_j$  and strike $L$& $\max\{L- (w^iS_{t_1}^i-w^jS_{t_1}^j),0\}$ &$\Theta = \{ w^i, w^j \in \R, L \in \R \}$  \\
II.        &Call-on-max with strike $L$& $\max\{ \max\{S_{t_1}^k,k=1,\dots,d\}-L,0\}$ &$\Theta = \{L \in \R \}$\\
II.        &Put-on-max with strike $L$& $\max\{L-\max\{S_{t_1}^k,k=1,\dots,d\},0\}$ &$\Theta = \{L \in \R \}$\\
III.        &Call-on-min with strike $L$& $\max\{ \min\{S_{t_1}^k,k=1,\dots,d\}-L,0\}$ &$\Theta = \{L \in \R \}$\\
III.        &Put-on-min with strike $L$& $\max\{L-\min\{S_{t_1}^k,k=1,\dots,d\},0\}$ &$\Theta = \{L \in \R \}$\\
IV.        &Best-of-calls option with strikes $L_1,\dots,L_d$& $\max\left\{ \max\{S_{t_1}^k-L_k,0\},k=1,\dots,d\right\}$ &$\Theta =  \{L_1,\dots,L_d \in \R \}$
\\
IV.        &Best-of-puts option with strikes $L_1,\dots,L_d$& $\max\left\{ \max\{L_k-S_{t_1}^k,0\},k=1,\dots,d\right\}$ &$\Theta = \{L_1,\dots,L_d \in \R \}$\\
 \bottomrule
\end{tabular}}}
\end{center}
\caption{The Table depicts the most common types of financial derivatives that can be trained by the presented approach. The leftmost column identifies the type of the respective financial derivative, where the price bounds of payoffs of the same type can be learned from a single neural network.}\label{tbl_payoff_functions}
\end{table}
\black 
\section{Martingale optimal transport}\label{sec_mot}
The presented approach from Section~\ref{sec_training} can easily be adjusted to modified market settings given that it is possible to establish a continuous relationship between the prevailing market scenario and resultant model-free price bounds of derivatives. In this section we show how the approach can be adapted to the setting used in martingale optimal transport (compare among many other relevant articles \cite{beiglbock2013model}, \cite{beiglbock2016problem}, \cite{cheridito2020martingale}, and \cite{dolinsky2014martingale}), where instead of observing a finite amount of call and put option prices, one assumes that the entire one-dimensional marginal distributions of the underlying assets at future dates are known. This situation is according to the Breeden-Litzenberger result~\cite{breeden1978prices} equivalent to the case where one can observe call and put option prices for a continuum of strikes on each of the associated maturities on which the financial derivative $\Phi\in B(\R_+^{nd},\R)$ depends.
In the martingale optimal transport case, one wants to compute the arbitrage-free upper price bound\footnote{We implicitly assume absence of a bid-ask spread and of transaction costs.} of $\Phi$ which leads to the maximization problem
\begin{equation}\label{eq_sup_mot}
\sup_{\Q \in \mathcal{M}(\B{\mu}_1,\dots,\B{\mu}_n)}\E_\Q[\Phi(S)],
\end{equation}
where\footnote{$\mathcal{P}(\R_+^{nd})$ denotes the set of all Borel probability measures on $\R_+^{nd}$.}
\begingroup\makeatletter\def\f@size{9}\check@mathfonts
\def\maketag@@@#1{\hbox{\m@th\normalsize\normalfont#1}}
\begin{equation}
\label{eq_definition_mot_set}
\begin{aligned}
\mathcal{M}(\B{\mu}_1,\dots,\B{\mu}_n):=\bigg\{\Q \in \mathcal{P}(\R_+^{nd})~\bigg|~&\Q\circ \left(S_{t_i}^k\right)^{-1} = \mu_i^k,\E_\Q[S_{t_{i+1}}^k~|~\boldsymbol{S}_{t_i},\dots,\boldsymbol{S}_{t_1}]=S_{t_i}^k  ~\Q\text{-a.s. for all } i,k\bigg\}
\end{aligned}
\end{equation}
\endgroup
describes the set of all $n$-step martingale measures with fixed one-dimensional marginals $\B{\mu}_i=(\mu_i^1,\dots,\mu_i^d)$, $i=1,\dots,n$, of all involved securities.

We show that, when considering a single asset and two future maturities, \eqref{eq_sup_mot} can be approximated by a properly constructed neural network. To that end, let
\[
\mathcal{P}_1(\R_+):=\left\{\Q\in \mathcal{P}(\R_+)~\middle|~\int_{\R_+} x \, \D \Q(x) < \infty\right\}
\]
 denote the set of probability measures on $\R_+$ with existing first moment.
Further, we introduce the $1$-Wasserstein-distance $\operatorname{W}(\cdot,\cdot)$ between two measures $\mu_1,\mu_2 \in \mathcal{P}_1(\R_+)$, which is defined through 
\[
\operatorname{W}(\mu_1,\mu_2) := \inf_{\pi \in \Pi(\mu_1,\mu_2)}\int_{\R_+^2} | u-v |\,\D\pi(u,v),
\]
with $\Pi(\mu_1,\mu_2)$ denoting the set of all couplings\footnote{More precisely, the set  $\Pi(\mu_1,\mu_2)$ is defined as $\Pi(\mu_1,\mu_2):=\left\{ \pi \in \mathcal{P}_1(\R_+^2)~:~\pi \circ S_{t_i}^{-1}=\mu_i,~i=1,2\right\}$} of $\mu_1$ and $\mu_2$, compare e.g. \cite{villani2008optimal}. 

We recall the construction of the $\mathcal{U}$-quantization from \cite{baker2012martingales}. Given some probability measure $\mu \in \mathcal{P}_1(\R_+)$ and some $N \in \N$ we set for $i=1,\dots,N$
\[
x_i^{(N)}(\mu) := N \int_{(i-1)/N}^{i/N} F_{\mu}^{-1}(u)\,\D u,
\]
where $F_{\mu}^{-1}(u):= \inf \{ x\in \R_+ :   F_{\mu}(x) \geq u \}$ denotes the $u$-quantile associated to the cumulative distribution function $F_{\mu}$ of $\mu$, and we denote $\B{x}^{(N)}(\mu):=\left(x_l^{(N)}(\mu)\right)_{l=1,\dots,N}$. Then, by \cite[Theorem 2.4.12.]{baker2012martingales} it holds that
\begin{equation}\label{eq_u_quant}
\mathcal{U}^{(N)}(\mu):= \frac{1}{N}\sum_{i=1}^N \delta_{x_i^{(N)}(\mu)}
\end{equation}
converges weakly to $\mu$ for $N \rightarrow \infty$. Since the mean of $\mathcal{U}^{(N)}(\mu)$ and $\mu$ coincide for all $N \in \N$ due to \cite[Lemma 2.4.4.]{baker2012martingales} and $\mu$, $\mathcal{U}^{(N)}(\mu)\in \mathcal{P}_1(\R_+)$, 
we further obtain convergence in the $1$-Wasserstein-distance, compare \cite[Definition 6.8]{villani2008optimal}. 
This means particularly that  for all $\mu \in \mathcal{P}_1(\R_+)$ and for all $\delta>0$ there exists some $N\in \N$ such that $\operatorname{W}(\mathcal{U}^{(N)}(\mu),\mu)< \delta$.

We derive the following  novel \black result which asserts  for the first time \black  that two-marginal martingale optimal transport problems can be approximated arbitrarily well by neural networks.\footnote{We recall that $\|\cdot \|_2$ is an arbitrary norm on $\R^2$.} 
\begin{thm}\label{lem_approx_mot}
Let $\Phi: \R_+^2 \rightarrow \R$ be continuous such that $\sup_{x_1,x_2 \in \R_+}\frac{|\Phi(x_1,x_2)|}{1+x_1+x_2} < \infty $.
Then, for all $\varepsilon >0$, $N \in \N$, and compact sets $\mathbb{K} \subset \R_+^{N}$, there exists a neural network $\mathcal{N} \in \mathfrak{N}_{2N,2}$ such that for all $(\mu_1,\mu_2) \in \mathcal{P}_1(\R_+)\times \mathcal{P}_1(\R_+)$ with $\mu_1 \preceq \mu_2$\footnote{Here $\preceq$ denotes the convex order for measures $(\mu_1,\mu_2) \in \mathcal{P}_1(\R_+)\times \mathcal{P}_1(\R_+)$, i.e., $\mu_1 \preceq \mu_2$ means \\ $\int_{\R_+} f \D \mu_1 \leq \int_{\R_+} f \D \mu_2$ for all convex functions $f:\R_+ \rightarrow \R$.}, there exists some $\delta >0$ such that if  $\operatorname{W}(\mathcal{U}^{(N)}(\mu_1),\mu_1)<\delta$, $\operatorname{W}(\mathcal{U}^{(N)}(\mu_2),\mu_2)<\delta$, and $\B{x}^{(N)}(\mu_1),\B{x}^{(N)}(\mu_2) \in \mathbb{K}$, then
\begin{equation}\label{eq_eps_expression}
\begin{aligned}
\bigg\|&\mathcal{N}  \left(\B{x}^{(N)}(\mu_1),\B{x}^{(N)}(\mu_2)\right)-\left(\inf_{\Q \in \mathcal{M}(\mu_1,\mu_2)}\E_\Q[\Phi],\sup_{\Q \in \mathcal{M}(\mu_1,\mu_2)}\E_\Q[\Phi]\right)\bigg\|_2<\varepsilon.
\end{aligned}
\end{equation}
\end{thm}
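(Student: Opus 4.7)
The plan is to apply the universal approximation theorem (Proposition \ref{lem_universal}) to a continuous function on $\mathbb{K}\times\mathbb{K}$ that encodes the two-marginal MOT bounds of atomic measures with prescribed support, and then to invoke the stability of the MOT functional together with the order-preserving nature of Baker's quantization to transfer the estimate back to the original marginals $(\mu_1,\mu_2)$.

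First, I would introduce for $\B{x}=(x_1,\dots,x_N)\in\R_+^N$ the atomic measure $\mu_{\B{x}}:=\frac{1}{N}\sum_{i=1}^N\delta_{x_i}$ and, on the subset
\[
D:=\left\{(\B{x},\B{y})\in\mathbb{K}\times\mathbb{K}:\mu_{\B{x}}\preceq\mu_{\B{y}}\right\},
\]
define the map
\[
F(\B{x},\B{y}):=\left(\inf_{\Q\in\mathcal{M}(\mu_{\B{x}},\mu_{\B{y}})}\E_\Q[\Phi],\ \sup_{\Q\in\mathcal{M}(\mu_{\B{x}},\mu_{\B{y}})}\E_\Q[\Phi]\right).
\]
By Strassen's theorem $\mathcal{M}(\mu_{\B{x}},\mu_{\B{y}})$ is non-empty on $D$, and the linear growth assumption on $\Phi$ together with compactness of $\mathbb{K}$ ensures that both extrema are finite. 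The set $D$ is closed in $\mathbb{K}\times\mathbb{K}$ because convex order is preserved under weak limits of uniformly compactly supported probability measures, so $D$ is itself compact.

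The main technical obstacle will be to show that $F$ is continuous on $D$, which is a stability statement for the MOT value under joint $1$-Wasserstein convergence of marginals in convex order. The argument proceeds by selecting, for any converging input sequence $(\B{x}^n,\B{y}^n)\to(\B{x},\B{y})$, $\varepsilon$-optimal martingale couplings $\Q^n\in\mathcal{M}(\mu_{\B{x}^n},\mu_{\B{y}^n})$, extracting a weakly convergent subsequence (tightness is automatic since all supports lie in the compact set $\mathbb{K}$), verifying that the limit is again a martingale coupling with the correct marginals, and passing to the limit in $\E_{\Q^n}[\Phi]$ (the continuity of $\Phi$ and the $\|\cdot\|_\infty$-boundedness on the compact support yield the convergence of expectations); the linear growth assumption supplies the required uniform integrability in the subsequent step. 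Stability results of this flavour have appeared in the MOT literature. Once continuity of $F$ on $D$ is established, Tietze's extension theorem furnishes a continuous extension $\tilde F:\mathbb{K}\times\mathbb{K}\to\R^2$, and Proposition \ref{lem_universal} produces a network $\mathcal{N}\in\mathfrak{N}_{2N,2}$ with $\sup_{\mathbb{K}\times\mathbb{K}}\|\mathcal{N}-\tilde F\|_2<\varepsilon/2$.

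To connect the approximation back to the original marginals, I would use two properties of Baker's $\mathcal{U}$-quantization from \cite{baker2012martingales}. First, it preserves the convex order: $\mu_1\preceq\mu_2$ implies $\mathcal{U}^{(N)}(\mu_1)\preceq\mathcal{U}^{(N)}(\mu_2)$, so $(\B{x}^{(N)}(\mu_1),\B{x}^{(N)}(\mu_2))\in D$ whenever it lies in $\mathbb{K}\times\mathbb{K}$, and at that point $\tilde F=F$ coincides with the MOT bounds of the quantized marginals. Second, I would apply again the MOT stability principle, now between the original marginals $(\mu_1,\mu_2)$ and their atomic approximations $(\mathcal{U}^{(N)}(\mu_1),\mathcal{U}^{(N)}(\mu_2))$, to pick $\delta>0$ (depending on the pair $(\mu_1,\mu_2)$) so small that $\operatorname{W}(\mathcal{U}^{(N)}(\mu_i),\mu_i)<\delta$ for $i=1,2$ implies that the corresponding pair of MOT bounds differ by less than $\varepsilon/2$ in $\|\cdot\|_2$. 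A triangle inequality combining the UAT-estimate with this stability estimate then yields \eqref{eq_eps_expression}.
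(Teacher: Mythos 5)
Your proposal follows essentially the same architecture as the paper's proof: continuity of the discretization map $\B{x}\mapsto\frac{1}{N}\sum_{i}\delta_{x_i}$ into $(\mathcal{P}_1(\R_+),\operatorname{W})$, continuity of the MOT value on the closed (hence compact) set $D$ of pairs whose empirical measures are in convex order, the universal approximation theorem on that set, order preservation of the $\mathcal{U}$-quantization via \cite[Theorem 2.4.11]{baker2012martingales}, a second stability estimate between $(\mu_1,\mu_2)$ and $(\mathcal{U}^{(N)}(\mu_1),\mathcal{U}^{(N)}(\mu_2))$, and a concluding triangle inequality. The Tietze extension is a harmless detour: Proposition~\ref{lem_universal} applies directly to continuous functions on the compact set $D$, which is how the paper proceeds.

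The one step you leave genuinely open is the continuity/stability of the MOT value, and your sketched argument for it does not suffice. Taking $\varepsilon$-optimal couplings $\Q^n\in\mathcal{M}(\mu_{\B{x}^n},\mu_{\B{y}^n})$, extracting a weak limit, and passing to the limit in $\E_{\Q^n}[\Phi]$ yields only \emph{upper} semicontinuity of the supremum (and lower semicontinuity of the infimum). The reverse inequality requires showing that a near-optimal martingale coupling of the limiting marginals can be approximated by martingale couplings of the perturbed marginals; this is the hard direction of MOT stability, and it is precisely what the paper imports from \cite[Theorem 2.9]{wiesel2019continuity} (see also \cite{backhoff2019stability}, \cite{neufeld2021stability}) rather than reproving. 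Moreover, your second invocation of stability compares $(\mu_1,\mu_2)\in\mathcal{P}_1(\R_+)\times\mathcal{P}_1(\R_+)$ with their quantizations, and these marginals need not be compactly supported, so the compactness-based sketch cannot cover that application at all; the general $\operatorname{W}$-stability theorem under the linear growth condition on $\Phi$ is indispensable there. Since you do point to the stability literature as the intended justification, the overall plan is sound, but as written the continuity claim — the mathematical core of the theorem — is asserted rather than proved.
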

\begin{proof}
See Section \ref{appendix_proof}.
\end{proof}
\begin{rem}
The assumption in Theorem~\ref{lem_approx_mot} stating that the atoms of the approximating $\mathcal{U}$-quantizations are contained in some prespecified compact set $\mathbb{K}$ can always be fulfilled  if one only considers marginals with support in $\mathbb{K}$.

If one does not want to restrict to compactly supported marginals, one can start with $\widehat{\mu}_1, \widehat{\mu}_2\in \mathcal{P}_1(\R_+)$ and consider for every $r>0$ the sets $B_r(\widehat{\mu}_i):=\left\{\mu_i \in \mathcal{P}_1(\R_+)~:~W(\mu_i, \widehat{\mu}_i) \leq r\right\}$ for $i=1,2$. Then there exists some compact set $\mathbb{K} \subset \R_+^N$ s.t. $x_l^{(N)}(\mu_i) \in \mathbb{K}$ for all $l=1,\dots,N$, $\mu_i \in B_r(\widehat{\mu}_i)$, $i=1,2$.

Indeed, there exists some constant $C>0$ such that 
\begin{equation}\label{eq_compact_K}
\left|x_l^{(N)}(\widehat{\mu}_i)\right|\leq  C \text{ for all } l=1,\dots,N,i=1,2.
\end{equation}
Moreover, for all $\mu_i \in B_r(\hat{\mu}_i)$, $i=1,2$, we have for all $l=1,
\dots,N$ that
\begingroup\makeatletter\def\f@size{9}\check@mathfonts
\def\maketag@@@#1{\hbox{\m@th\normalsize\normalfont#1}}
\begin{align}
\left|x_l^{(N)}(\widehat{\mu}_i)-x_l^{(N)}({\mu_i})\right|=\left|N \int_{(l-1)/N}^{l/N} F_{\widehat{\mu}_i}^{-1}(u)-F_{{\mu}_i}^{-1}(u)\,\D u\right| \notag &\leq N \int_{(l-1)/N}^{l/N} \left|F_{\widehat{\mu}_i}^{-1}(u)-F_{{\mu_i}}^{-1}(u)\right|\D u \notag \\
&\leq  N \int_{0}^{1} \left|F_{\widehat{\mu}_i}^{-1}(u)-F_{{\mu_i}}^{-1}(u)\right|\D u \notag \\
&= N \cdot {W}({\widehat{\mu}_i},{\mu_i}) \leq N \cdot r, \label{eq_equality_fh}
\end{align}
\endgroup
see for the last equality in \eqref{eq_equality_fh} also \cite[Equation 3.1.6]{rachev1998mass} and \cite[Equation 3.5]{ruschendorf2007monge}. This implies for all $i=1,2$ that \\$\left(x_1^{(N)}(\mu_i),\dots,x_N^{(N)}(\mu_i)\right) \in \mathbb{K}:=\left\{ (x_1,\dots,x_N) \in \R_+^N~|~|x_j| \leq C+ N\cdot r \text{ for all } j \right\}$.
\end{rem}

\begin{rem}
If we are only interested in predicting the upper bound $\sup_{\Q \in \mathcal{M}(\mu_1,\mu_2)}\E_\Q[\Phi]$, then one can see, by carefully reading the proof of Theorem~\ref{lem_approx_mot}, that it suffices to assume that $\Phi$ is upper semi-continuous (and linearly bounded) to obtain the existence of a neural network $\mathcal{N} \in \mathfrak{N}_{2N,1}$ that approximates the bound as in \eqref{eq_eps_expression} w.r.t. $\|\cdot \|_1$. Similarly, to derive the existence of a neural network approximating the lower bound $\inf_{\Q \in \mathcal{M}(\mu_1,\mu_2)}\E_\Q[\Phi]$ it is only necessary to assume that $\Phi$ is lower semi-continuous (and linearly bounded).
\end{rem}
In the case with two marginals, the training routine from Algorithm~\ref{algo_training_nn} modifies to the below presented Algorithm~\ref{algo_training_mot} which is also depicted in Figure~\ref{fig_diagram_mot}.

\begin{algorithm}\label{algo_training_mot}
\SetAlgoLined
\small{
\SetKwInOut{Input}{Input}
\SetKwInOut{Output}{Output}
\Input{Payoff function $\Phi:\R_+^{2} \rightarrow \R$ fulfilling the assumptions of Theorem~\ref{lem_approx_mot}; \newline $N$ describing the number of maximal supporting values of the approximating distribution; Number of samples $\mathcal{S}$; Method to create samples such that the associated marginal distributions increase in convex order (cf. Remark~\ref{rem_mot_algo}~(a));}

\For {$i$ in $\{1,\dots,\mathcal{S}\}$} {Create samples $\B{X_i} = (x_1^i,\dots,x_N^i,y_1^i,\dots,y_N^i) \in \R_+^{2N}$ such that $\frac{1}{N}\sum_{j=1}^N \delta_{x_j^i} \preceq \frac{1}{N}\sum_{j=1}^N \delta_{y_j^i}$; \\
Compute, e.g. via linear programming (compare \cite{guo2019computational}), the target value 
\footnotesize{\begin{align*}
\B{Y_i} = \Bigg(&\inf_{\Q \in \mathcal{M}\left(\frac{1}{N}\sum_{j=1}^N \delta_{x_j^i},~\frac{1}{N}\sum_{j=1}^N \delta_{y_j^i}\right)}\E_\Q[\Phi],\sup_{\Q \in \mathcal{M}\left(\frac{1}{N}\sum_{j=1}^N \delta_{x_j^i},~\frac{1}{N}\sum_{j=1}^N \delta_{y_j^i}\right)}\E_\Q[\Phi]\Bigg)\in \R^2;
\end{align*}}}

Train via back-propagation a neural network $\mathcal{N} \in \mathfrak{N}_{2N,2}$ to predict $\B{Y_i}$ given $\B{X_i}$, i.e., such that $\mathcal{N}(\B{X_i}) \approx \B{Y_i}$ (which is possible due to Theorem~\ref{lem_approx_mot}); \\
\Output{Trained neural network $\mathcal{N} \in \mathfrak{N}_{2N,2}$;}
\caption{Training of a neural network for the computation of the price bounds of some financial derivative $\Phi$ in the MOT setting}
}
\end{algorithm}
\begin{rem}~\label{rem_mot_algo}
\begin{itemize}
\item[(a)]
The critical point in Algorithm~\ref{algo_training_mot} is the method which is employed to create discrete samples of marginals that are increasing in convex order. It is a priori not obvious how to create a good sample set. One working methodology is to draw samples from marginals that are \emph{similar} to marginals one wants to predict, then to apply $\mathcal{U}$-quantization and to discard measures that do not increase in convex order. Compare also Example ~\ref{exa_mot_wo_constraints}. \emph{Being similar} can for example mean to be close in Wasserstein distance or being from the same parametric family of probability distributions.
\item[(b)] Since the proof of Theorem~\ref{lem_approx_mot} relies on the continuity of the MOT problem, as stated in \cite{backhoff2019stability}, \cite{neufeld2021stability}, and \cite{wiesel2019continuity}, for which - at the moment - no extension to the case with more than two marginals or in multidimensional settings is known, we stick to the two marginal case in the one-dimensional setting.
\item[(c)] The approach can be extended to the case with information on the variance as in \cite{lutkebohmert2019tightening}, with Markovian assumptions as imposed in \cite{eckstein2019martingale} and \cite{sester2020robust}, or even more general constraints on the distribution (see e.g. \cite{ansari2020improved}), whenever it is possible to establish a continuous relationship between marginals and prices. Compare also Example~\ref{exa_mot_varaince}, where we consider a constrained martingale optimal transport problem.
\end{itemize}

\end{rem}

To compute the robust bounds $\inf_{\Q \in \mathcal{M}(\mu_1,\mu_2)}\E_\Q[\Phi]$ and $\sup_{\Q \in \mathcal{M}(\mu_1,\mu_2)}\E_\Q[\Phi]$ for arbitrary (possibly continuous) marginal distributions  $\mu_1 \preceq \mu_2$, with the above explained approach, we approximate $\mu_1$ and $\mu_2$ in the Wasserstein distance by $\mathcal{U}^{(N)}(\mu_1)\preceq \mathcal{U}^{(N)}(\mu_2)$  and then compute the resultant price bound via 
$\mathcal{N}(\B{x}^{(N)}(\mu_1),\B{x}^{(N)}(\mu_2))$, where $\mathcal{N} \in \mathfrak{N}_{2N,2}$ denotes the neural network which was trained according to Algorithm~\ref{algo_training_mot}. 
\subsection{Examples}
\label{sec_mot_exa}
We train a neural network according to Algorithm~\ref{algo_training_mot}. Thus, the input features are, in contrast to the examples from Section~\ref{sec_market_data}, not directly option prices but are given by marginal distributions which are discretized according to the $\mathcal{U}$-quantization.
In the following we fix a payoff function $\Phi(S_{t_1},S_{t_2})=|S_{t_1}-S_{t_2}|$ and present the results when predicting
$\inf_{\Q \in \mathcal{M}(\mu_1,\mu_2)}\E_\Q[\Phi(S_{t_1},S_{t_2})]$ and $\sup_{\Q \in \mathcal{M}(\mu_1,\mu_2)}\E_\Q[\Phi(S_{t_1},S_{t_2})]$ via neural networks in dependence of the marginals $\mu_1$ and $\mu_2$.
\subsubsection{Implementation}
To train the neural network we create according to Algorithm~\ref{algo_training_mot} numerous artificial marginals with a fixed number $N \in \N$ of supporting values. In the examples below we choose $N =20$. To discretize continuous marginal distributions we apply the $\mathcal{U}$-quantization as introduced in \eqref{eq_u_quant}. 
Below, we describe which marginal distributions $\mu_1^i,\mu_2^i$ we generate for $i=1,\dots,\mathcal{S}$ with $\mathcal{S}$ being the total number of samples.
\begin{enumerate}
\item \textbf{Log-Normally distributed marginals} \\(if $i\mod 4 = 0$)\\
$\mu_1^i \sim \mathcal{L}\mathcal{N}(\mu^i-(\sigma_1^i)^2/2,(\sigma_1^i)^2)$ with $\mu^i \sim \mathcal{U}([-2,2])$ and $\sigma_1^i \sim \mathcal{U}([0,0.5])$, \\
$\mu_2^i \sim 
\mathcal{L}\mathcal{N}(\mu^i-(\sigma_2^i)^2/2,(\sigma_2^i)^2)$ with $\sigma_2^i = \sigma_1^i \cdot \widetilde{\sigma_2}^i$, where $\widetilde{\sigma_2}^i \sim \mathcal{U}([1,2])$;
\item \textbf{Uniform marginals} \\(if $i\mod 4 = 1$)\\
 $\mu_1^i \sim \mathcal{U}([\mu^i-a^i,\mu^i+a^i])$,\\
 $\mu_2^i \sim \mathcal{U}([\mu^i-b^i,\mu^i+b^i])$ with $\mu^i \sim \mathcal{U}([10,20]),~a^i\sim \mathcal{U}([0,5]),~b^i\sim \mathcal{U}([a^i,a^i+5])$;
 \item \textbf{Continuous and discrete uniform marginals} \\(if $i\mod 4 = 2$)\\
 $\mu_1^i \sim \mathcal{U}([\mu^i-a^i,\mu^i+a^i])$,\\
 $\mu_2^i \sim \mathcal{U}(\{\mu-a^i,\mu+a^i\})$ with $\mu^i \sim \mathcal{U}([5,10]),~a^i\sim \mathcal{U}([0,5])$;
\item \textbf{Uniform and triangular marginals} \\(if $i\mod 4 = 3$)\\
 $\mu_1^i \sim \mathcal{U}([m^i-l^i/2,m^i+l^i/2])$ for $l^i \sim \mathcal{U}([0,5])$, $m^i \sim \mathcal{U}([l^i,l^i+10])$ and triangular marginals $\mu_2^i$ with lower limit $l^i$, upper limit  $u^i \sim \mathcal{U}([m^i,m^i+10])$ and mode $m^i$. 
\end{enumerate}
If the generated marginals are in convex order\footnote{This can be easily checked by verifying that $\int_{\R_+}(x-L)_+\D\mu_1(x) \leq \int_{\R_+} (x-L)_+\D\mu_2(x)$ for all atoms $L$ as well as $\int_{\R_+} x\D\mu_1(x) = \int_{\R_+} x\D\mu_2(x)$, compare e.g. \cite{ma2000convex}.}, then we add the discretized values $\mathcal{U}^{(N)}(\mu_1^i)$, $\mathcal{U}^{(N)}(\mu_2^i)$ to the sample set $(\B{X_i})_{i=1,\dots,\mathcal{S}}$ and compute via linear programming $\inf_{\Q \in \mathcal{M}(\mu_1^i,\mu_2^i)}\E_\Q[\Phi]$ as well as $\sup_{\Q \in \mathcal{M}(\mu_1^i,\mu_2^i)}\E_\Q[\Phi]$ which we then add as corresponding target values to $(\B{Y_i})_{i=1,\dots,\mathcal{S}}$.

\subsubsection{Architecture of the neural networks}
Given a set of samples $(\B{X_i})_{i=1,\dots,\mathcal{S}}$ and a set of targets $(\B{Y_i})_{i=1,\dots,\mathcal{S}}$ we train a neural network using the back-propagation algorithm with an Adam optimizer implemented in \emph{Python} using \emph{Tensorflow} similar to Section~\ref{sec_market_data}. The loss function is a $L^2$-loss function, the neural networks comprise $3$ hidden layers with $512$ neurons each, and with \emph{ReLU} activation functions. \\
For further details of the code we refer to \href{https://github.com/juliansester/deep\_model\_free\_pricing}{https://github.com/juliansester/deep\_model\_free\_pricing}.

\begin{exa}[MOT without constraints]\label{exa_mot_wo_constraints}
We report the results for a neural network trained to $\mathcal{S}=100,000$ samples that are generated according to the procedure described above.
We split the samples in training set, test set ($10\%$ of the samples) and validation set ($20\%$ of the training samples) and report a mean absolute error of ${0.0082}$  as well as mean squared error of $0.0001$ \black on the test set after $1000$ epochs of training with early stopping.
The accuracy of the trained neural network on the test set is displayed in Figure~\ref{fig_accuracy_mot}.
Additionally, we test the neural network in the following specific situations.
\begin{itemize}
\item[(a)] $\mu_1 =\mathcal{LN}(0.5-0.5\cdot 0.25^2,0.25^2)$,\\$\mu_2 = \mathcal{LN}(0.5-0.5\cdot 0.5^2,0.5^2)$
\item[(b)] $\mu_1 =\mathcal{LN}(-0.05,0.1),~\mu_2 = \mathcal{LN}(-0.1,0.2)$
\item[(c)] $\mu_1=\mathcal{U}([8,12]),~\mu_2=\mathcal{U}([5,15])$
\item[(d)] $\mu_1=\mathcal{U}([5,10]),~\mu_2=\mathcal{U}(\{5,10\})$
\item[(e)] $\mu_1=\mathcal{U}([2,4])$,\\
$ \frac{\D \mu_2}{\D \lambda}(x)= (x-1)/3\one_{[1,2]}(x)+ (1/3)\one_{[2,4]}(x)+(5-x)/3\one_{[4,5]}(x)$, where $\lambda$ denotes the Lebesgue-measure.
\end{itemize}
The results are displayed in Table~\ref{tbl_nn_results} and indicate that the bounds can indeed be approximated very precisely.
\begin{table}[h!]
\begin{center}

\caption{The table displays for different marginals the approximated lower bounds and upper bounds that are computed via trained neural networks~(NN) and via a linear programming approach~(LP).}\label{tbl_nn_results}
\resizebox{0.8\textwidth}{!}
{\begin{tabular}{@{}l
 c c c c | c}
 \toprule
 &Lower bound (LP)    &Lower bound (NN)    &Upper bound (LP)   &Upper bound (NN) & Cumulative Error\\
\midrule
(a) &$0.2363$&$0.2573$&$0.4226$&$0.4210$&$0.0226$\\
(b) &$0.0814$&$0.0939$&$0.1870$&$0.1946$&$0.0202$\\
(c) &$1.7491$&$1.7503$&$2.6220$&$2.6082$&$0.0150$\\
(d) &$1.6688$&$1.6659$&$1.6687$&$1.6636$&$0.0080$\\
(e) &$0.3587$&$0.3626$&$0.7215$&$0.7151$&$0.0103$\\

\bottomrule
\end{tabular}}
\end{center}
\end{table}

\begin{figure}[h!]
\begin{center}
\includegraphics[scale=0.45]{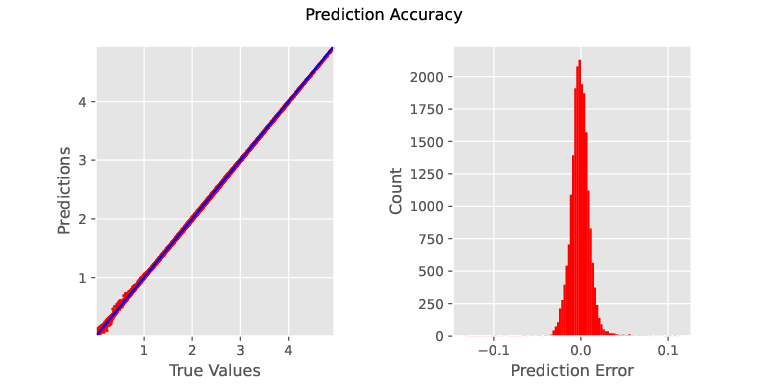}
\caption{This figure illustrates the accuracy of the predictions of the trained neural network in the MOT setting, evaluated on the test set.}\label{fig_accuracy_mot}
\end{center}
\end{figure}
\end{exa}

In the next example we show how the introduced methodology can be applied to constrained martingale optimal transport problems.

\begin{exa}[MOT with variance constraints]\label{exa_mot_varaince}
We train the neural network with the same artificial samples of marginals as in the previous Example~\ref{exa_mot_wo_constraints}. In addition to the discretized marginals $\mathcal{U}^{(N)}(\mu_1^i)$, $\mathcal{U}^{(N)}(\mu_2^i)$, we consider as an additional feature a pre-specified level of the variance of the returns as in \cite{lutkebohmert2019tightening}. This means we only consider martingale measures $\Q$ which additionally fulfil $\operatorname{Var}_\Q(S_{t_2}/S_{t_1})=\sigma_{12}^2$. In particular, $\sigma_{12}$ is thus, an additional feature of the samples $\B{X_i}$. 
It was already indicated in Remark~\ref{rem_mot_computation}~(c), that the approximation through neural networks, as stated in Theorem~\ref{lem_approx_mot}, can also be obtained in this case due to \cite[Theorem A.6.]{lutkebohmert2019tightening} ensuring continuity of the map 
\begin{align*}
(\mu_1,\mu_2)\mapsto \sup \big\{\E_{\Q}[\Phi]~\big|~&\Q \in \mathcal{M}(\mu_1,\mu_2) \text{ and } \operatorname{Var}_\Q(S_{t_2}/S_{t_1})=\sigma_{12}^2 \big\}
\end{align*}
 when $\Phi$ is Lipschitz-continuous and when the marginals have compact support. Then, an adaption of Theorem~\ref{lem_approx_mot} is straightforward.
The results of a neural network approximation for the payoff function $\Phi(S_{t_1},S_{t_2})=|S_{t_2}-S_{t_1}|$ for the marginal distributions (c) and (e) from Example~\ref{exa_mot_wo_constraints} are displayed in Figure~\ref{fig_variance_plots}. The marginals from (a) and (b) are omitted as they do not satisfy the conditions from \cite[Theorem A.6.]{lutkebohmert2019tightening}, whereas the marginals from (d) are omitted, since the value of lower and upper bound coincide and they therefore do not depend on a pre-specified level of the variance. The neural network was trained on $500 000$ samples for $1000$ epochs with early stopping. We assign $10\%$ of the samples to the test set and $20\%$ of the remaining training samples to the validation set (which is relevant for the early stopping rule). The resulting mean absolute error on the test set is ${0.0044}$,  the mean squared error is  $0.00003$\black. In Figure~\ref{fig_accuracy_variance} we show to which degree the predictions deviate from the target values on the test set implying that the accuracy of the predictions is indeed very high on the test set.
\begin{figure}[h!]
\begin{center}
\subfloat[\label{fig_variance_plots}]{
\includegraphics[scale=0.45]{}
}
\subfloat[\label{fig_accuracy_variance}]{\includegraphics[scale=0.45]{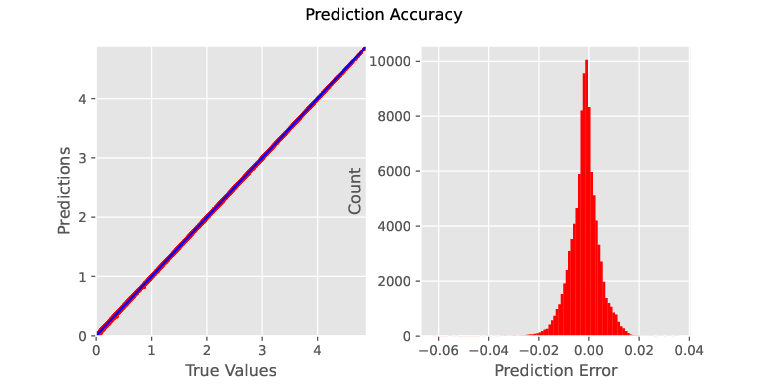}}

\caption{(a): This figure shows the accuracy of a neural network that was trained with $500 000$ samples. The accuracy is displayed for the test marginals from (c) and (e) of Example~\ref{exa_mot_wo_constraints}. The points indicate the upper and lower bounds for the prices under the influence of variance information obtained from the trained neural network (NN) in comparison with the precise bounds, computed with a linear programming (LP) approach, that are indicated by the solid lines.\\
(b): This figure illustrates the accuracy of the predictions of the trained neural network in the MOT setting with variance constraints, evaluated on the test set.}
\end{center}
\end{figure}
\end{exa}

\section{Proofs}\label{appendix_proof}
\begin{proof}[Proof of Theorem~\ref{thm_convergence}~(a)]
We prove the continuity of $\left(\boldsymbol{K}, \B{\pi}, \B{S_{t_0}}, \theta\right) \mapsto \overline{D}^{\mathfrak{B},B}_{(\boldsymbol{K},\boldsymbol{\pi},\B{S_{t_0}})}\left(\Phi_{\theta}\right)$.
 The continuity of $\left(\boldsymbol{K}, \B{\pi}, \B{S_{t_0}}, \theta\right) \mapsto \underline{D}^{\mathfrak{B},B}_{(\boldsymbol{K},\boldsymbol{\pi},\B{S_{t_0}})}\left(\Phi_{\theta}\right)$ can be obtained analogously.\\
Let $\varepsilon > 0$ and let $(\boldsymbol{K}, \B{\pi}, \B{S_{t_0}}, \theta) \in \mathbb{K}_1$. For any $\widetilde{\delta}>0$, by the continuity of 
\begin{align*}
\left(\Theta,d_p\right) &\rightarrow \left(C(\R_+^{nd},\R),d_{\infty,B}\right)\\
\theta &\mapsto \Phi_\theta,
\end{align*}
we can choose $\delta>0$ sufficiently small such that for all $\widetilde{\theta} \in \Theta$ we have that
\begin{equation}\label{ineq_lemma_delta}
\begin{aligned}
\|\theta-\widetilde{\theta}\|_p < \delta
\end{aligned}
\end{equation}
implies that
\begin{equation}\label{eq_max_ineq}
\begin{aligned}
\|&\Phi_{\theta}-\Phi_{\widetilde{\theta}}\|_{\infty,B} <\widetilde{\delta}.
\end{aligned}
\end{equation}
Moreover, we can choose $\widetilde{\delta}$ and $\delta$ small enough to ensure that
\begin{equation}\label{eq_f1df2deps}
\widetilde{\delta}+\delta (1+\kappa)  \mathfrak{B}  + {\delta}\mathfrak{B} < \varepsilon/2.
\end{equation}
We pick some $\delta >0, \widetilde{\delta}>0$ such that  the implication from  \eqref{ineq_lemma_delta} to \eqref{eq_max_ineq} is satisfied, and such that \eqref{eq_f1df2deps} holds \black true and let $(\widetilde{\boldsymbol{K}}, \widetilde{\B{\pi}}, \boldsymbol{\widetilde{S}_{t_0}}, \widetilde{\theta}) \in \mathbb{K}_1$ satisfy for all $i,j,k$ that
\begin{equation}\label{eq_assumption_delta}
\begin{aligned}
&\left|K_{ijk}^{\operatorname{call}}-\widetilde{K}_{ijk}^{\operatorname{call}}\right| < \delta, &&\left|K_{ijk}^{\operatorname{put}}-\widetilde{K}_{ijk}^{\operatorname{put}}\right| < \delta,\\
&\|\theta-\widetilde{\theta}\|_p < \delta,&&|S_{t_0}^k-\widetilde{S}_{t_0}^k| < \delta,\\
&|\pi_{\operatorname{call},i,j,k}^+- \widetilde{\pi}_{\operatorname{call},i,j,k}^+|<\delta,&&|\pi_{\operatorname{call},i,j,k}^-- \widetilde{\pi}_{\operatorname{put},i,j,k}^-|<\delta,\\
&|\pi_{\operatorname{put},i,j,k}^+- \widetilde{\pi}_{\operatorname{call},i,j,k}^+|<\delta,&&|\pi_{\operatorname{put},i,j,k}^-- \widetilde{\pi}_{\operatorname{put},i,j,k}^-|<\delta.
\end{aligned}
\end{equation}
First, assume that 
\begin{equation}
\overline{D}^{\mathfrak{B},B}_{(\boldsymbol{\widetilde{K}},\widetilde{\boldsymbol{\pi}},\B{\widetilde{S}_{t_0}})}\left(\Phi_{\widetilde{\theta}}\right) \geq \overline{D}^{\mathfrak{B},B}_{(\boldsymbol{K},\boldsymbol{\pi},\B{S_{t_0}})}\left(\Phi_{\theta}\right). \label{eq_greater_assumption}
\end{equation}
In this case, consider parameters $\widehat{a},(\B{\widehat{c}_{ijk}})_{i,j,k}$, $(\B{\widehat{p}_{ijk}})_{i,j,k}$, $(\widehat{\Delta}_i^k)_{i,k}$ such that $\Psi^{(\boldsymbol{K},\B{S_{t_0}})}_{(\widehat{a},\B{\widehat{c}_{ijk}},\B{\widehat{p}_{ijk}},\widehat{\Delta}_i^k)}(\B{s}) \geq \Phi_{\theta}(\B{s})$ for all $\B{s} \in [0,B]^{nd}$, $\Sigma\left({\B{\widehat{c}_{ijk}},\B{\widehat{p}_{ijk}},\widehat{\Delta}_i^k}\right) \leq \mathfrak{B}$, and such that 
\begin{equation}\label{eq_c_epsilon_half}
\C\left(\Psi^{(\boldsymbol{K},\B{S_{t_0}})}_{(\widehat{a},\B{\widehat{c}_{ijk}},\B{\widehat{p}_{ijk}},\widehat{\Delta}_i^k)},\boldsymbol{\pi}\right)\leq \overline{D}^{\mathfrak{B},B}_{(\boldsymbol{K},\boldsymbol{\pi},\B{S_{t_0}})}\left(\Phi_{\theta}\right)+\varepsilon/2,
\end{equation}
which is possible due to \eqref{eq_no_arbitrage}.
Then, we obtain by definition of the semi-static strategies defined in \eqref{eq_phi} and by \eqref{eq_assumption_delta} that
\begingroup\makeatletter\def\f@size{9}\check@mathfonts
\def\maketag@@@#1{\hbox{\m@th\normalsize\normalfont#1}}
\begin{align*}
\left| \Psi^{(\boldsymbol{K},\B{S_{t_0}})}_{(\widehat{a},\B{\widehat{c}_{ijk}},\B{\widehat{p}_{ijk}},\widehat{\Delta}_i^k)}(\boldsymbol{s}) -\Psi^{(\boldsymbol{\widetilde{K}},\B{\widetilde{S}_{t_0}})}_{(\widehat{a},\B{\widehat{c}_{ijk}},\B{\widehat{p}_{ijk}},\widehat{\Delta}_i^k)}(\boldsymbol{s}) \right| 
&=\Bigg|\sum_{i=1}^n\sum_{k=1}^d\sum_{j=1}^{n_{ik}^{\operatorname{opt}}} \bigg[(\widehat{c}_{ijk}^{~+}-\widehat{c}_{ijk}^{~-})\cdot\left(\max\left\{s_i^k-K_{ijk}^{\operatorname{call}},0\right\}-\max\left\{s_i^k-\widetilde{K}_{ijk}^{\operatorname{call}},0\right\}\right)\bigg]\\
&\hspace{0.5cm}+\sum_{i=1}^n\sum_{k=1}^d\sum_{j=1}^{n_{ik}^{\operatorname{opt}}} \bigg[(\widehat{p}_{ijk}^{~+}-{\widehat{p}_{ijk}}^{~-})\cdot\left(\max\left\{K_{ijk}^{\operatorname{put}}-s_i^k,0\right\}-\max\left\{\widetilde{K}_{ijk}^{\operatorname{put}}-s_i^k,0\right\}\right)\bigg] \\
&\hspace{0.5cm}+ \sum_{k=1}^d \bigg(\widehat{\Delta}_0^k(\widetilde{S}_{t_0}^k-{S}_{t_0}^k)+\kappa|\widehat{\Delta}_0^k|(|\widetilde{S}_{t_0}^k|-|S_{t_0}^k|)\bigg)\Bigg|\\
&\leq \delta (1+\kappa) \cdot \Sigma\left({\B{\widehat{c}_{ijk}},\B{\widehat{p}_{ijk}},\widehat{\Delta}_i^k}\right) \leq \delta (1+\kappa)  \mathfrak{B} 
\end{align*}
\endgroup
for all $\B{s}=(s_1^1,\dots,s_1^d,\dots,s_n^1,\dots,s_n^d)\in [0,B]^{nd}$.
Thus it holds pointwise on $[0,B]^{nd}$ that
\begin{equation}\label{eq_proof_f1d}
\begin{aligned}
\Psi^{(\boldsymbol{\widetilde{K}},\B{\widetilde{S}_{t_0}})}_{(\widetilde{\delta}+\delta (1+\kappa)  \mathfrak{B} +\widehat{a},\B{\widehat{c}_{ijk}},\B{\widehat{p}_{ijk}},\widehat{\Delta}_i^k)} &= \widetilde{\delta}+\delta (1+\kappa)  \mathfrak{B} + \Psi^{(\boldsymbol{\widetilde{K}},\B{\widetilde{S}_{t_0}})}_{(\widehat{a},\B{\widehat{c}_{ijk}},\B{\widehat{p}_{ijk}},\widehat{\Delta}_i^k)} \geq \widetilde{\delta}+\Psi^{(\boldsymbol{{K}},\boldsymbol{{S}}_{t_0})}_{(\widehat{a},\B{\widehat{c}_{ijk}},\B{\widehat{p}_{ijk}},\widehat{\Delta}_i^k)}  \geq \widetilde{\delta}+\Phi_{\theta}\geq \Phi_{\widetilde{\theta}},
\end{aligned}
\end{equation}
where the last inequality follows due to \eqref{eq_max_ineq}, since $\|\Phi_{\theta}-\Phi_{\widetilde{\theta}}\|_{\infty,B} < \widetilde{\delta}$.
This then yields by the definition of the cost function in \eqref{eq_definition_cost_functional}, by \eqref{eq_max_ineq}, \eqref{eq_f1df2deps}, and by \eqref{eq_assumption_delta} that
\begingroup\makeatletter\def\f@size{8.5}\check@mathfonts
\def\maketag@@@#1{\hbox{\m@th\normalsize\normalfont#1}}
\begin{equation}\label{eq_53a}
\begin{aligned}
\bigg|\C\left(\Psi^{(\boldsymbol{\widetilde{K}},\B{\widetilde{S}_{t_0}})}_{(\widetilde{\delta}+\delta (1+\kappa)  \mathfrak{B} +\widehat{a},\B{\widehat{c}_{ijk}},\B{\widehat{p}_{ijk}},\widehat{\Delta}_i^k)},\widetilde{\boldsymbol{\pi}}\right)-\C\left(\Psi^{(\boldsymbol{K},\B{S_{t_0}})}_{(\widehat{a},\B{\widehat{c}_{ijk}},\B{\widehat{p}_{ijk}},\widehat{\Delta}_i^k)} ,\boldsymbol{\pi} \right)\bigg| &\leq \widetilde{\delta}+\delta (1+\kappa)  \mathfrak{B} +{\delta}\left(\sum_{i=1}^n\sum_{k=1}^d\sum_{j=1}^{n_{ik}^{\operatorname{opt}}}(\widehat{c}_{ijk}^{~+}+\widehat{c}_{ijk}^{~-}+\widehat{p}_{ijk}^{~+}+\widehat{p}_{ijk}^{~-})\right)\\
&\leq \widetilde{\delta}+\delta (1+\kappa)  \mathfrak{B}  + {\delta} \mathfrak{B}<\varepsilon/2.
\end{aligned}
\end{equation}
\endgroup
Hence, we obtain that
\begin{align}
&\overline{D}^{\mathfrak{B},B}_{(\boldsymbol{\widetilde{K}},\widetilde{\boldsymbol{\pi}},\B{\widetilde{S}_{t_0}})}\left(\Phi_{\widetilde{\theta}}\right)-\overline{D}^{\mathfrak{B},B}_{(\boldsymbol{K},\boldsymbol{\pi},\B{S_{t_0}})}\left(\Phi_{\theta}\right) \notag \\
= &\inf_{\substack{( a, \B{c_{ijk}},\B{p_{ijk}}, \Delta_i^k): \\ \Psi^{(\boldsymbol{\widetilde{K}},\B{\widetilde{S}_{t_0}})}_{( a, \B{c_{ijk}},\B{p_{ijk}}, \Delta_i^k)} \geq \Phi_{\widetilde{\theta}}, \\ \Sigma\left({\B{{c}_{ijk}},\B{{p}_{ijk}},{\Delta}_0^k}\right) \leq \mathfrak{B}}} \C\left(\Psi^{(\boldsymbol{\widetilde{K}},\B{\widetilde{S}_{t_0}})}_{( a, \B{c_{ijk}},\B{p_{ijk}}, \Delta_i^k)},\widetilde{\boldsymbol{\pi}}\right) \notag  -\inf_{\substack{( a, \B{c_{ijk}},\B{p_{ijk}}, \Delta_i^k): \\ \Psi^{(\boldsymbol{K},\B{S_{t_0}})}_{( a, \B{c_{ijk}},\B{p_{ijk}}, \Delta_i^k)} \geq \Phi_{\theta}, \notag \\
\Sigma\left({\B{{c}_{ijk}},\B{{p}_{ijk}},{\Delta}_0^k}\right) \leq \mathfrak{B}}}\C\left(\Psi^{(\boldsymbol{K},\B{S_{t_0}})}_{( a, \B{c_{ijk}},\B{p_{ijk}}, \Delta_i^k)},\boldsymbol{\pi}\right) \notag \\
\leq &\C\left(\Psi^{(\boldsymbol{\widetilde{K}},\B{\widetilde{S}_{t_0}})}_{(\widetilde{\delta}+\delta(1+\kappa) \mathfrak{B}+\widehat{a},\B{\widehat{c}_{ijk}},\B{\widehat{p}_{ijk}},\widehat{\Delta}_i^k)},\widetilde{\boldsymbol{\pi}} \right) \notag -\C\left(\Psi^{(\boldsymbol{K},\B{S_{t_0}})}_{(\widehat{a},\B{\widehat{c}_{ijk}},\B{\widehat{p}_{ijk}},\widehat{\Delta}_i^k)} ,\boldsymbol{\pi} \right)+\varepsilon/2\notag
< \varepsilon, \notag
\end{align}
where the last two inequalities are consequences of \eqref{eq_f1df2deps}, \eqref{eq_greater_assumption}, \eqref{eq_c_epsilon_half}, and \eqref{eq_53a}.

If instead the inequality $\overline{D}^{\mathfrak{B},B}_{(\boldsymbol{K},\boldsymbol{\pi},\B{S_{t_0}})}\left(\Phi_{\theta}\right) \geq \overline{D}^{\mathfrak{B},B}_{(\boldsymbol{\widetilde{K}},\widetilde{\boldsymbol{\pi}},\B{\widetilde{S}_{t_0}})}\left(\Phi_{\widetilde{\theta}}\right)$ holds, then in this case we choose parameters
$\widehat{\widetilde{a}},(\B{\widehat{\widetilde{c}}_{ijk}})_{i,j,k}$, $(\B{\widehat{\widetilde{p}}_{ijk}})_{i,j,k}$, $(\widehat{\widetilde{\Delta}}_i^k)_{i,k}$ such that $\Psi^{(\boldsymbol{\widetilde{K}},\B{\widetilde{S}_{t_0}})}_{(\widehat{\widetilde{a}},\B{\widehat{\widetilde{c}}_{ijk}},\B{\widehat{\widetilde{p}}_{ijk}},\widehat{\widetilde{\Delta}}_i^k)}(\B{s}) \geq \Phi_{\widetilde{\theta}}(\B{s})$ for all $\B{s} \in [0,B]^{nd}$,  $\Sigma\left({\B{\widehat{\widetilde{c}}_{ijk}},\B{\widehat{\widetilde{p}}_{ijk}},\widehat{\widetilde{\Delta}}_i^k}\right) \leq \mathfrak{B}$, and such that 
\begin{equation*}
\C\left(\Psi^{(\boldsymbol{\widetilde{K}},\B{\widetilde{S}_{t_0}})}_{(\widehat{\widetilde{a}},\B{\widehat{\widetilde{c}}_{ijk}},\B{\widehat{\widetilde{p}}_{ijk}},\widehat{\widetilde{\Delta}}_i^k)},\widetilde{\boldsymbol{\pi}}\right)
\leq
\overline{D}^{\mathfrak{B},B}_{(\boldsymbol{\widetilde{K}},\widetilde{\boldsymbol{\pi}},\B{\widetilde{S}_{t_0}})}\left(\Phi_{\widetilde{\theta}}\right)+\varepsilon/2,
\end{equation*}
and then we repeat the following line of argumentation. This shows part (a).
\end{proof}

\begin{rem}
For all $\varepsilon>0$, $(\B{K},\B{\pi},\B{S}_{t_0},\theta) \in \mathbb{K}_1$, there exists some $\delta >0$ such that if
\eqref{eq_assumption_delta} holds for $(\boldsymbol{\widetilde{K}},\B{\widetilde{\pi}},\B{\widetilde{S}_{t_0}},\widetilde{\theta}) \in \mathbb{K}_1$, then for all strategies $\Psi^{(\boldsymbol{K},\B{S_{t_0}})}_{(a,\B{c_{ijk}},\B{p_{ijk}},\Delta_i^k)}$ satisfying $\Psi^{(\boldsymbol{K},\B{S_{t_0}})}_{(a,\B{c_{ijk}},\B{p_{ijk}},\Delta_i^k)} \geq \Phi_{\theta}$ on $[0,B]^{nd}$ and $\Sigma\left({\B{{c}_{ijk}},\B{{p}_{ijk}},{\Delta}_i^k}\right) \leq \mathfrak{B}$, there exists some $\widetilde{a} \in \R$ with 
\begin{equation}\label{eq_approximation_parameters}
|a-\widetilde{a}|<\varepsilon/2
\end{equation}
such that $\Psi^{(\boldsymbol{\widetilde{K}},\B{\widetilde{S}_{t_0}})}_{(\widetilde{a},\B{c_{ijk}},\B{p_{ijk}},\Delta_i^k)} \geq {\Phi}_{\widetilde{\theta}}$ on $[0,B]^{nd}$.
Indeed, let $\varepsilon>0$,  $(\B{K},\B{\pi},\B{S}_{t_0},\theta) \in \mathbb{K}_1$, and choose $\delta,\widetilde{\delta}>0$ analogue as in the proof of Theorem~\ref{thm_convergence}~(a), such that \eqref{ineq_lemma_delta}, \eqref{eq_max_ineq}, \eqref{eq_f1df2deps}, and \eqref{eq_assumption_delta} hold.
Then according to \eqref{eq_proof_f1d} we see that the strategy $\Psi^{(\boldsymbol{\widetilde{K}},\B{\widetilde{S}_{t_0}})}_{(\widetilde{\delta}+\delta (1+\kappa)  \mathfrak{B} +a,\B{c_{ijk}},\B{p_{ijk}},\Delta_i^k)}$ fulfils \eqref{eq_approximation_parameters}.
\end{rem}
\begin{proof}[Proof of Theorem~\ref{thm_convergence}~(b)]
According to Theorem~\ref{thm_convergence}~(a), the map 
\[
(\boldsymbol{K},\boldsymbol{\pi},\B{S_{t_0}},\theta)\mapsto \left(\overline{D}^{\mathfrak{B},B}_{(\boldsymbol{K},\boldsymbol{\pi},\B{S_{t_0}})}\left(\Phi_{\theta}\right),\underline{D}^{\mathfrak{B},B}_{(\boldsymbol{K},\boldsymbol{\pi},\B{S_{t_0}})}\left(\Phi_{\theta}\right) \right)
\]
is an element of $C({\mathbb{K}_1},\R^2)$.
Hence, we find according to Proposition~\ref{lem_universal} a neural network $\mathcal{N}_1\in \mathfrak{N}_{N_{\operatorname{input}},2}$ such that \eqref{eq_neural_net_approx} holds on ${\mathbb{K}_1}$.
\end{proof}
\begin{proof}[Proof of Theorem~\ref{thm_convergence}~(c)]
Consider a sequence $\left(\boldsymbol{K}^{(n)},\B{\pi}^{(n)},\B{S_{t_0}}^{(n)},\theta^{(n)}\right)_{n \in \N} \subset \mathbb{K}_2$ with
\begingroup\makeatletter\def\f@size{8}\check@mathfonts
\def\maketag@@@#1{\hbox{\m@th\normalsize\normalfont#1}}
\[
\lim_{n\rightarrow \infty} d_{N_{\operatorname{input}}}\left((\boldsymbol{K}^{(n)},\B{\pi}^{(n)},\B{S_{t_0}}^{(n)},\theta^{(n)}),(\boldsymbol{K},\B{\pi},\B{S_{t_0}},\theta)\right)=0
\]
\endgroup
for some $(\boldsymbol{K},\B{\pi},\B{S_{t_0}},\theta) \in \mathbb{K}_2$. Since by assumption $\mathfrak{B}<\infty$ and by \eqref{eq_boundedness_assumption}, the sequence
\begingroup\makeatletter\def\f@size{7}\check@mathfonts
\def\maketag@@@#1{\hbox{\m@th\normalsize\normalfont#1}}
\[
x^{(n)}:=\left(a^*,(c^*_{1jk})_{j,k},(p^*_{1jk})_{j,k},({\Delta_0^k}^*)_k\right)(\boldsymbol{K}^{(n)},\B{\pi}^{(n)},\B{S_{t_0}}^{(n)},\theta^{(n)}),
\]
\endgroup
$n\in \N$, is bounded. Thus, there exists at least one accumulation point $x \in \R^{1+4M+d}$. Hence, we can find a subsequence $(x^{(n_k)})_{k\in \N}$ with $\lim_{k \rightarrow \infty} d_{1+4M+d} \left(x^{(n_k)},x\right)=0$. Then, we obtain by the continuity of $\overline{D}^{\mathfrak{B},B}$, shown in Theorem~\ref{thm_convergence}~(a), and by the continuity of the cost function $\C$, defined in \eqref{eq_definition_cost_functional}, w.r.t.\,all its arguments, that
\begin{align*}
\overline{D}^{\mathfrak{B},B}_{(\boldsymbol{K},\boldsymbol{\pi},\B{S_{t_0}})}\left(\Phi_{\theta}\right)=&\lim_{k \rightarrow \infty} \overline{D}^{\mathfrak{B},B}_{\left(\boldsymbol{K}^{(n_k)},\boldsymbol{\pi}^{(n_k)},\B{S_{t_0}}^{(n_k)}\right)}\left(\Phi_{\theta^{(n_k)}}\right)
=\lim_{k \rightarrow \infty} \C\left(\Psi^{(\boldsymbol{K}^{(n_k)},\B{S_{t_0}}^{(n_k)})}_{x^{(n_k)}},\boldsymbol{\pi}^{(n_k)}\right)
=\C\left(\Psi^{(\boldsymbol{K},\B{S_{t_0}})}_{x},\boldsymbol{\pi}\right).
\end{align*}
Using the continuity of $\Psi$ w.r.t.\,its parameters and the continuity of $\theta \mapsto \Phi_\theta$ as in \eqref{eq_defn_theta_map}, we also obtain that $\Psi^{(\boldsymbol{K},\B{S_{t_0}})}_{x} \geq \Phi_\theta$ on $[0,B]^{nd}$. Moreover, $x:=\left(a,(c_{1jk})_{j,k},(p_{1jk})_{j,k},({\Delta_0^k})_k\right)\in \R^{1+4M+d}$ satisfies $\Sigma(\B{c_{1jk}},\B{p_{1jk}},\Delta_0^k) \leq \mathfrak{B}$. Thus, $x$ is indeed a minimal super-replication strategy of $\Phi_\theta$ for parameters $(\boldsymbol{K}, \B{\pi},\B{S_{t_0}},\theta)$. So we have shown that any accumulation point $x$ is a minimal super-replication strategy of $\Phi_\theta$ for parameters $(\boldsymbol{K}, \B{\pi}, \B{S_{t_0}},\theta)$. Since we assumed that the minimizer is unique, the accumulation point $x$ is unique and is necessarily the limit of the sequence $(x^{(n)})_{n\in \N}$. Therefore, we have shown that
\begingroup\makeatletter\def\f@size{8}\check@mathfonts
\def\maketag@@@#1{\hbox{\m@th\normalsize\normalfont#1}}
\begin{align*}
\lim_{n\rightarrow\infty}&\left(a^*,(c^*_{1jk})_{j,k},(p^*_{1jk})_{j,k},({\Delta_0^k}^*)_k\right)(\boldsymbol{K}^{(n)}, \B{\pi}^{(n)},\B{S_{t_0}}^{(n)},\theta^{(n)})
=
\left(a^*,(c^*_{1jk})_{j,k},(p^*_{1jk})_{j,k},({\Delta_0^k}^*)_k\right)\left(\boldsymbol{K}, \B{\pi}, \B{S_{t_0}}, \theta\right).
\end{align*}
\endgroup
\end{proof}
\begin{proof}[Proof of Theorem~\ref{thm_convergence}~(d)]~\\
According to Theorem~\ref{thm_convergence}~(b), the map
\begin{align*}
(\boldsymbol{K},\boldsymbol{\pi},\B{S_{t_0}},\theta) \mapsto \bigg(&a^*,(c^*_{1jk})_{j,k},(p^*_{1jk})_{j,k},({\Delta_0^k}^*)_k\bigg)\big(\boldsymbol{K},\B{\pi}, \B{S_{t_0}}, \theta\big),
\end{align*}
is an element of $C({\mathbb{K}_2},\R^{1+4M+d})$.
Thus, by Proposition~\ref{lem_universal}, there exists some $\mathcal{N}_2 \in \mathfrak{N}_{N_{\operatorname{input}},1+4M+d}$ such that \eqref{eq_neural_net_approx_strats} holds on ${\mathbb{K}_2}$.
\end{proof}
\begin{proof}[Proof of Theorem~\ref{lem_approx_mot}]
Let $\varepsilon>0$, $N \in \N$, and pick some compact set $\mathbb{K}\subset \R_+^N$.
We observe that the map 
\begin{align*}
\mathcal{D}^{(N)}:\left(\R_+^N,d_N\right) &\rightarrow \left(\mathcal{P}_1(\R_+), \operatorname{W}\right)\\
(x_1,\dots,x_N)&\mapsto \frac{1}{N}\sum_{i=1}^N \delta_{x_i}
\end{align*}
is continuous.
Indeed, if $$\R_+^N \ni \B{x}^m =(x_1^m,\dots,x_N^m)\rightarrow \B{x}=(x_1,\dots,x_N)\in \R_+^N$$ for $m \rightarrow \infty$, then we can consider for all $m \in \N$ the coupling $\pi^m:= \frac{1}{N} \sum_{i=1}^N\delta_{(x_i,x_i^m)}$ and obtain that
\begin{equation}\label{eq_wasserstein_0}
\begin{aligned}
\operatorname{W}\left(\mathcal{D}^{(N)}(\B{x}),\mathcal{D}^{(N)}(\B{x}^m)\right)&\leq  \int_{\R_+^2} | u-v |\D\pi^m (u,v) = \frac{1}{N} \sum_{i=1}^N|x_i-x_i^m|\rightarrow 0\text{ for } m \rightarrow \infty.
\end{aligned}
\end{equation}
Since by assumption $\Phi$ is upper semi-continuous with $\sup_{x_1,x_2 \in \R_+}\frac{|\Phi(x_1,x_2)|}{1+x_1+x_2} < \infty $, we can apply \cite[Theorem 2.9.]{wiesel2019continuity} which ensures that for any $(\mu_1,\mu_2) \in \mathcal{P}_1(\R_+)\times\mathcal{P}_1(\R_+)$ with $\mu_1 \preceq \mu_2$ and $(\mu_1^m,\mu_2^m) \in \mathcal{P}_1(\R_+)\times \mathcal{P}_1(\R_+)$ with $\mu_1^m \preceq \mu_2^m$, $m \in \N$, satisfying  $\operatorname{W}(\mu_1^m,\mu_1) \rightarrow 0$, $\operatorname{W}(\mu_2^m,\mu_2) \rightarrow 0$ for $m \rightarrow \infty$, it holds
\begin{equation}\label{eq_continuity_mot}
\lim_{m \rightarrow \infty}\left|\sup_{\Q \in \mathcal{M}(\mu_1^m,~\mu_2^m)} \E_\Q[\Phi]-\sup_{\Q \in \mathcal{M}(\mu_1,~\mu_2)} \E_\Q[\Phi] \right|=0.
\end{equation}
Since for any set of measures $\mathcal{M}\subset \mathcal{P}(\R_+^2)$ we have  $\inf_{\Q \in \mathcal{M}}\E_\Q[\Phi]=-\sup_{\Q \in \mathcal{M}}\E_\Q[-\Phi]$, and since $-\Phi$ is also upper semi-continuous with $\sup_{x_1,x_2 \in \R_+}\frac{|-\Phi(x_1,x_2)|}{1+x_1+x_2} < \infty $, we can apply the same arguments to see that
\begin{equation}\label{eq_continuity_mot_lower_bound}
\lim_{m \rightarrow \infty}\left|\inf_{\Q \in \mathcal{M}(\mu_1^m,~\mu_2^m)} \E_\Q[\Phi]-\inf_{\Q \in \mathcal{M}(\mu_1,~\mu_2)} \E_\Q[\Phi] \right|=0.
\end{equation}
Define the closed set\footnote{We refer to \cite[Definition 2.45]{baker2012martingales} and \cite[Lemma 2.49]{baker2012martingales} for a characterization of $\B{x},\B{y} \in \R_+^N$ to satisfy \newline $\mathcal{D}^{(N)}(\B{x}) \preceq \mathcal{D}^{(N)}(\B{y})$.} $\mathfrak{C}^{(N)}:=\left\{(\B{x},\B{y}) \in \R_+^N\times \R_+^N~:~\mathcal{D}^{(N)}(\B{x}) \preceq \mathcal{D}^{(N)}(\B{y})\right\}$.
Then, we obtain by \eqref{eq_wasserstein_0}, \eqref{eq_continuity_mot}, and \eqref{eq_continuity_mot_lower_bound} the continuity of 
\begin{equation}
\begin{aligned}
(\B{x},\B{y}) \mapsto \Bigg(&\inf_{\Q \in \mathcal{M}\left(\mathcal{D}^{(N)}(\B{x}),~\mathcal{D}^{(N)}(\B{y})\right)} \E_\Q[\Phi],\sup_{\Q \in \mathcal{M}\left(\mathcal{D}^{(N)}(\B{x}),~\mathcal{D}^{(N)}(\B{y})\right)} \E_\Q[\Phi]\Bigg) \qquad \text{on } \mathfrak{C}^{(N)}.
\end{aligned}
\end{equation}
Hence, the universal approximation theorem from Proposition~\ref{lem_universal} guarantees the existence of a neural network ${\mathcal{N}} \in \mathfrak{N}_{2N,2}$ such that 
\begingroup\makeatletter\def\f@size{9.2}\check@mathfonts
\def\maketag@@@#1{\hbox{\m@th\normalsize\normalfont#1}} 
\begin{equation}\label{eq_eps_2_ineq_1}
\begin{aligned}
\sup_{\B{x},\B{y} \in \mathbb{K} \cap \mathfrak{C}^{(N)}}\Bigg\|&\mathcal{N}(\B{x},\B{y})-\Bigg(\inf_{\Q \in \mathcal{M}\left(\mathcal{D}^{(N)}(\B{x}),~\mathcal{D}^{(N)}(\B{y})\right)} \E_\Q[\Phi],\sup_{\Q \in \mathcal{M}\left(\mathcal{D}^{(N)}(\B{x}),~\mathcal{D}^{(N)}(\B{y})\right)} \E_\Q[\Phi]\Bigg)\Bigg\|_2 < \varepsilon/2.
\end{aligned}
\end{equation}
\endgroup
Now let $(\mu_1,\mu_2) \in \mathcal{P}_1(\R_+)\times\mathcal{P}_1(\R_+)$ with $\mu_1 \preceq \mu_2$. Then \cite[Theorem 2.4.11.]{baker2012martingales} ensures that 
\begin{equation}\label{eq_convex_order_u}
\mathcal{U}^{(N)}(\mu_1)\preceq \mathcal{U}^{(N)}(\mu_2).
\end{equation}
This and the continuity of the two-marginal MOT problem with respect to its marginals, as stated in \eqref{eq_continuity_mot} and \eqref{eq_continuity_mot_lower_bound}, implies that there exists some $\delta>0$ such that if $\operatorname{W}(\mathcal{U}^{(N)}(\mu_1),\mu_1)<\delta$, $\operatorname{W}(\mathcal{U}^{(N)}(\mu_2),\mu_2)<\delta$, then
%%%%%%%
\begingroup\makeatletter\def\f@size{9}\check@mathfonts
\def\maketag@@@#1{\hbox{\m@th\normalsize\normalfont#1}} 
\begin{equation}\label{eq_eps_2_ineq_2}
\begin{aligned}
\bigg\|\bigg(\inf_{\Q \in \mathcal{M}(\mu_1,\mu_2)}\E_{\Q}[\Phi],&\sup_{\Q \in \mathcal{M}(\mu_1,\mu_2)}\E_{\Q}[\Phi]\bigg)
-\Bigg(\inf_{\Q \in \mathcal{M}(\mathcal{U}^{(N)}(\mu_1) ,\mathcal{U}^{(N)}(\mu_2) )}\E_{\Q}[\Phi],\sup_{\Q \in \mathcal{M}(\mathcal{U}^{(N)}(\mu_1) ,\mathcal{U}^{(N)}(\mu_2) )}\E_{\Q}[\Phi]\Bigg)\bigg\|_2<\varepsilon/2.
\end{aligned}
\end{equation}
\endgroup
%%%%%%%%%%%
By definition of the map $\mathcal{D}^{(N)}$ it holds that $\mathcal{D}^{(N)}\left(\B{x}^{(N)}(\mu_i)\right)=\mathcal{U}^{(N)}(\mu_i)$ for $i=1,2$. In particular, we have by \eqref{eq_convex_order_u} that $\left(\B{x}^{(N)}(\mu_1),\B{x}^{(N)}(\mu_2)\right)\in \mathfrak{C}^{(N)}$. Thus, the triangle inequality and \eqref{eq_eps_2_ineq_1} combined with \eqref{eq_eps_2_ineq_2} implies that if $\operatorname{W}(\mathcal{U}^{(N)}(\mu_1),\mu_1)<\delta$, $\operatorname{W}(\mathcal{U}^{(N)}(\mu_2),\mu_2)<\delta$, and $\B{x}^{(N)}(\mu_1),\B{x}^{(N)}(\mu_2) \in \mathbb{K}$, then
%%%%%%%
\begingroup\makeatletter\def\f@size{8}\check@mathfonts
\def\maketag@@@#1{\hbox{\m@th\normalsize\normalfont#1}} 
\begin{align*}
&\Bigg\|\mathcal{N}\left(\B{x}^{(N)}(\mu_1),\B{x}^{(N)}(\mu_2)\right)-\left(\inf_{\Q \in \mathcal{M}(\mu_1,\mu_2)}\E_\Q[\Phi],\sup_{\Q \in \mathcal{M}(\mu_1,\mu_2)}\E_\Q[\Phi]\right)\Bigg\|_2\\
\leq &\Bigg\|\mathcal{N}\left(\B{x}^{(N)}(\mu_1),\B{x}^{(N)}(\mu_2)\right)-\bigg(\inf_{\Q \in \mathcal{M}(\mathcal{U}^{(N)}(\mu_1) ,\mathcal{U}^{(N)}(\mu_2) )}\E_\Q[\Phi],\sup_{\Q \in \mathcal{M}(\mathcal{U}^{(N)}(\mu_1) ,\mathcal{U}^{(N)}(\mu_2) )}\E_\Q[\Phi]\bigg)\Bigg\|_2\\
&+\Bigg\|\bigg(\inf_{\Q \in \mathcal{M}(\mathcal{U}^{(N)}(\mu_1) ,\mathcal{U}^{(N)}(\mu_2) )}\E_\Q[\Phi],\sup_{\Q \in \mathcal{M}(\mathcal{U}^{(N)}(\mu_1) ,\mathcal{U}^{(N)}(\mu_2) )}\E_\Q[\Phi]\bigg)
-\left(\inf_{\Q \in \mathcal{M}(\mu_1,\mu_2)}\E_\Q[\Phi],\sup_{\Q \in \mathcal{M}(\mu_1,\mu_2)}\E_\Q[\Phi]\right)\Bigg\|_2\\
< &\varepsilon,
\end{align*}
\endgroup
%%%%%%%%%
which shows the assertion.
\end{proof}
%%%%%%%%%%%%%%%%%%%%%%%%%%%%%%%%%%%%%%%%%%%%%%%%%%%%%%%%
 \section*{Acknowledgment}
Financial support by the Nanyang Assistant Professorship Grant (NAP Grant) \emph{Machine Learning based Algorithms in Finance and Insurance} is gratefully acknowledged. 
	\noindent

}

%%%%%%%%%%%%%%%%%%%%%%%%%%%%%%%%%%%%
%\section{Conclusion}
\bibliographystyle{plain}
%\bibliography{biblio}

%%%%%%%%%%%%%%%%%%%%%%%%%%%%%%%%%%%%%

      \end{document}